\newtheorem{theorem}{Theorem}
\newcommand{\kic}{{\sc TAS}\xspace} 
\newcommand{\nn}{{none}\xspace} 
\algrenewcommand\algorithmicrequire{\textbf{Input:}}
\algrenewcommand\algorithmicensure{\textbf{Output:}}
\algnewcommand\algorithmicpass{\textbf{pass}}
\algnewcommand\algorithmicbreak{\textbf{break}}
\algnewcommand\algorithmicnot{\textbf{not}}
\begin{document}

\title{It's about time: Online Macrotask Sequencing in Expert Crowdsourcing}

\author{
Heinz Schmitz\\
Trier University of Applied Sciences, Germany\\
{\tt \small h.schmitz@hochschule-trier.de}\\
Ioanna Lykourentzou\\
Luxembourg Institute of Science and Technology\\
{\tt \small ioanna.lykourentzou@list.lu}
}

\maketitle
\begin{abstract}
\begin{quote}
We introduce the problem of Task Assignment and Sequencing (TAS), which adds the timeline perspective to expert crowdsourcing optimization. Expert crowdsourcing involves macrotasks, like document writing, product design, or web development, which take more time than typical binary microtasks, require expert skills, assume varying degrees of knowledge over a topic, and require crowd workers to build on each other's contributions. Current works usually assume offline optimization models, which consider worker and task arrivals known and do not take into account the element of time. Realistically however, time is critical: tasks have deadlines, expert workers are available only at specific time slots, and worker/task arrivals are not known a-priori. Our work is the first to address the problem of optimal task sequencing for online, heterogeneous, time-constrained macrotasks. We propose {\sc tas-online}, an online algorithm that aims to complete as many tasks as possible within budget, required quality and a given timeline, without future input information regarding job release dates or worker availabilities. Results, comparing {\sc tas-online} to four typical benchmarks, show that it achieves more completed jobs, lower flow times and higher job quality. This work has practical implications for improving the Quality of Service of current crowdsourcing platforms, allowing them to offer cost, quality and time 
improvements for expert tasks.
\end{quote}
\end{abstract}

\section{Introduction}
\label{introduction}

As the appeal of crowd work increases, there is a growing need to provide support for more complex tasks and workflows. Examples of such tasks include document editing, product design, social innovation and idea development, offered through dedicated platforms (upWorks\footnote{https://www.upwork.com/}, crowdSpring\footnote{http://www.crowdspring.com/} etc.), or incorporated into traditional ones through complex workflows (e.g. the recently launched CrowdFlower Labs\footnote{http://www.crowdflower.com/blog/introducing-crowdflower-labs}). This type of crowdsourcing is often referred to as \textit{expert crowdsourcing}, and the tasks that it involves are referred to as \textit{macrotasks}~\cite{Haas:2015:AMC:2824032.2824062}. Macrotasks differ from the typically crowdsourced microtasks in that they require expert skills, assume varying degrees of knowledge over a topic, may take more worker time and often involve task dependency, i.e. workers building on each other's contributions. 

Together with the demand for complex tasks and their supporting workflows, customers are increasingly interested in performance guarantees, i.e. the \textit{optimization} of expert crowdsourcing in terms of cost, quality and timeliness. Recent studies that examine expert crowdsourcing optimization~\cite{roy2015,Goel:2014:ATW:2567948.2577311} typically seek to find worker assignments per task such that the worker contributions add up to a required quality threshold within a given budget. Roughly speaking, the crowdsourced tasks play the roles of multiple knapsacks with some additional concepts like domain-specific expertise and wages per worker, different models of acceptance probabilities or types of quality aggregation. Unfortunately current studies do not take into account the aspect of time, for example in terms of task deadlines, worker time constraints, or time-dependent worker/task variability. As such these studies examine only the \textit{assignment part} of the worker allocation
problem (finding which worker should take which task), but not the \textit{sequencing part} (identifying when each worker should contribute). Moreover they usually assume an offline setting, where the algorithms are provided with the complete worker/task input information at once. \textit{In a realistic crowdsourcing setting however, time is an inherent property}: 
customers require the tasks to finish upon a certain deadline, expert workers are available only at specific time slots, and worker/task arrival or departure information is not a-priori known. Optimizing for time is thus crucial, and raises the need not only for worker-to-task assignment but also for sequencing. It also raises the need for online rather than index-based algorithms, which can take efficient sequencing decisions having access only to time-dependent information that is available until their decision point.

In this paper we introduce the problem of crowdsourcing Task Assignment and Sequencing (TAS), which adds the \emph{timeline perspective} to the crowdsourcing allocation optimization model: How can we find task assignments that can be rolled out in a realistic timeline, featuring unknown task release dates and worker availabilities, as it is the case for real platforms? 

To the best of our knowledge, this is the first work that addresses the problem of assignment \textit{and} sequencing optimization for expert crowdsourcing tasks. Overall, our three main contributions with this paper are:
\begin{itemize}
\item We explicitly add the \emph{timeline perspective} to task assignment modeling in expert crowdsourcing. That is, our models include not only the worker-to-task-assignments, but also the rolling out of these assignments along a timeline under reasonable constraints. We call this problem modeling TAS and prove its strong NP hardness. 
\item We propose a \emph{online algorithm}, {\sc tas-online}, which seeks to complete as many jobs as possible within budget, required quality and given timeline, by computing worker sequence-to-task matchings, and without any future input information regarding job release dates or worker availabilities.
\item We illustrate, through simulated and real-world experiments, that {\sc tas-online} can achieve more completed jobs, lower flow times and higher quality compared to four typical benchmarks.
\end{itemize}

The rest of this paper is organized as follows. In section~\ref{related} we recapitulate the related literature on crowdsourcing optimization, starting from earlier works that focus on micro-tasks and reaching latest research efforts on knowledge-intensive macro-tasks. In section~\ref{tas} we describe the characteristics of the expert crowdsourcing setting that this work applies and illustrate, through an example, why taking time into account matters in this particular setting. In this section we also formally model the TAS problem and prove its strong NP-hardness. Next, in section~\ref{algorithm} we describe the proposed online algorithm ({\sc tas-online}) for the solution of the TAS problem. In section~\ref{results}, we present and discuss the experimental results, obtained on both simulated and real-world data. These results compare {\sc tas-online} with four benchmarks found in the literature, and show that {\sc tas-online} achieves higher numbers of completed jobs (both in terms of absolute value and as a percentage of the solution's upper bound), lower flow times (the time between a job's release date and the latest assignment on that job), better budget utilization and higher levels of quality, comparable only the respective {\sc tas-offline} version for certain of the above measures. Finally, we discuss possible extensions of the TAS model and algorithm (section~\ref{extensions}) and end with the paper's main findings and conclusions (section~\ref{conclusion}).
\vspace{-1em}

\section{Related Work}
\label{related}

\subsection{Crowdsourcing Optimization}
Crowdsourcing optimization is a term used in various problem settings, including optimizing the selection of worker labor channels to improve performance~\cite{DBLP:conf/hcomp/KaranamCCDR14}, discovering the optimal worker wage  \cite{Horton:2010:LEP:1807342.1807376}, determining the optimal number of workers to undertake each task so as to maximize quality and minimize cost (a method referred to as ``plurality optimization" and applicable on n-ary tasks with an objective ``true value")~\cite{Mo2013}, 
or identifying the optimal set of tasks to forward to the crowd (for systems like database query execution ones, which are based partially on crowdsourcing and partially on automated methods)~\cite{7052378}.

The family of optimization problems that our work falls into is \textit{allocation optimization},
i.e. the identification of which worker should work on which task and when, in order to optimize one or more global performance metrics, which usually include cost, quality and number of acceptable tasks. This family of problems 
consists of two distinct optimization problems, \textit{task assignment} and \textit{task sequencing}. Task assignment examines which worker should be given which task. Task sequencing adds the element of time, examining \textit{when} will the worker be given the task. Most existing works focus on the first problem, i.e. task assignment, either for microtasks or for macrotasks. \textit{Microtasks} are tasks that require a small amount of worker time, accept binary (true/false) or n-ary (multiple choice) worker inputs, and the quality of which is determined through methods such as majority voting (assigning multiple workers per micro-task). \textit{Macrotasks}~\cite{Haas:2015:AMC:2824032.2824062} require more worker time, accept open-ended continuous (rather than binary or n-ary) worker inputs and their quality is determined through external subjective evaluation (for example peer review).

\subsection{Optimizing Task Assignments}
Regarding \textit{microtask assignment optimization}, Karger et al.~\cite{kargerBudget} work with homogeneous microtasks (that  all have the same level of difficulty and do no distinguish among task ``topics"), and propose a matching algorithm inspired by the standard belief propagation algorithm for approximating max marginals, which is order-optimal and minimizes cost. This study is among the first to show that the problem of task matching in crowdsourcing can be transformed to a bipartite graph design problem, where workers are one part of the graph, tasks are the other and the edges represent assignments of workers to tasks. Ho et al.~\cite{DBLP:conf/aaai/HoV12} work with heterogeneous microtasks of n-ary classification quality on a model where worker skills per microtask are a-priori unknown, and propose a two phase exploration-exploitation assignment algorithm that seeks to maximize the total benefit of the requester and is competitive with respect to its counterpart of known worker skills. Yuen et al.~\cite{Yuen:2012:TRC:2442657.2442661,Yuen:2012:TPM:2428413.2428477,Yuen:2011:TMC:2085036.2085206} propose a matrix factorization approach that utilizes the workers' task performance and search history to derive their  preferences and perform an improved task-to-worker matching. 
Regarding \textit{macrotask assignment optimization} Goel et al.~\cite{Goel:2014:ATW:2567948.2577311} and Roy et al.~\cite{roy2015,DBLP:journals/corr/RoyLTAD14} both propose task-to-worker assignment optimizations (the first using a mechanism design-based approach and the second through an index-based approach) on models that consider heterogeneous macrotasks and where the optimization goal is to maximize the utility of the requester while ensuring budget feasibility. Yue et al.~\cite{yue2015} add to this model the element of team instead of individual worker assignments, and propose a heuristic genetic algorithm that optimizes for task budget and quality, taking into account worker pay expectations, skills and availability.  Jabbari et al.~\cite{jabbari2015} add another interesting facet to the heterogeneous task assignment model, by extending it to cover the online aspect of the problem (workers arrive online, no prior knowledge over arrivals), and they impose certain constraints that must be respected, such as declaring feasible tasks that workers can handle and the payment they require. The difference of the above works with ours is that their models do not consider the element of time, i.e., they focus only on the task assignment and not the task sequencing aspect of the problem. 



\subsection{Optimizing Task Timing}
Finally, a few studies focus precisely on \textit{time-sensitive optimization}. 
Regarding \textit{time-sensitive microtasks}, Yu et al.~\cite{6782912} optimize the number of tasks to recommend to each worker per time unit with the objective of maximizing the time average number of successful (i.e. of acceptable quality) jobs for a given time period. Their model assumes binary task quality, a pull-and-filter task selection model (workers select which tasks they want to work on and the system filters these selections according to its optimization objective) and performs task allocation 
on the basis of worker accuracy measured in a [0,1] scale using a heuristic algorithm of linearithmic complexity. Although this study does incorporate the element of time, it is different than ours in that the model it uses assumes binary, homogeneous tasks rather than heterogeneous tasks of continuous quality.  The use of homogeneous tasks (all tasks have the same difficulty, no distinction of task topics) means that optimization needs to be performed in terms of the number of jobs per worker, rather in terms of allocating
 specific workers to specific jobs according to their skills. Bernstein et al.~\cite{journals/corr/abs-1204-2995,Bernstein:2011:CTS:2047196.2047201} also work with homogeneous microtasks and propose a retainer model for pre-recruiting (reserving) the optimal number of workers, so as to minimize task completion latency.
This work however does not take into account worker skills and subsequently does not seek to maximize task quality. On heterogeneous microtasks, Faridani et al.~\cite{conf/aaai/FaradaniHI11} and 
Minder et al.~\cite{minder2012} add the element of pricing to the problem model, proposing task pricing algorithms that aim to maximize the number of tasks finishing on time (the first study), while respecting budget and quality constraints (the second study). Mechanism design~\cite{Nath:2012:MDT:2436756.2436773, rajan2013} and multi-armed bandit mechanism design~\cite{Biswas:2015:TBF:2772879.2773291} mechanisms have also been employed to manipulate the time behavior of the crowd towards an efficient execution of time-critical tasks. The above works are different from our work, in that they do not explicitly sequence the tasks to the workers but they rather seek to incentivize the crowd's timely responses in order to achieve the time-related task objective.

In regards to \textit{time-sensitive macrotasks} Khazankin et al.~\cite{6450935} propose a mathematical optimization approach that learns the task selection behavior of workers and then executes tasks in a manner that optimizes for cost and considers deadlines. This approach does not consider task quality, yet it is one of the first attempts to sequence time-sensitive macrotasks. Finally, Boutsis and Kalogeraki ~\cite{6888877} propose an multi-objective optimization approach which searches for Pareto-optimal solutions, seeking to identify the group of workers (among multiple candidate groups) with the highest probability of finishing the task on time. This approach is different than ours, in that it does not apply sequencing along a timeline, but rather makes one-shot 
assignments based on the worker probabilities of meeting the deadline. 

Overall, crowdsourcing optimization studies have so far examined mainly the assignment but not the sequencing aspect of the problem. The works that optimize for time-sensitive task characteristics are few and they either focus on binary/n-ary microtasks (which differ significantly from the expert macrotasks that we are interested in) or they do not sequence the worker-to-task assignments along a timeline. 
We address in our work the problem of optimal task sequencing for \textit{online}, \textit{heterogeneous},  
\textit{time-constrained}, 
\textit{macrotasks}. 
As we will see in the following section, it is this type of tasks that expert crowsourcing consists of, and thus their optimization has significant impact on many recent platforms and applications.

\section{Task Assignment and Sequencing (\kic)}
\label{tas}
In this section we first describe, from a high-level viewpoint, the expert crowdsourcing task model which we target through this work. Then we provide an example to illustrate the importance of adding the timeline sequencing element into the 
above setting. Next we define the \kic problem model, in terms of the input data, feasible solutions, constraints and optimization goal. Finally we analyze this problem model in terms of complexity. 
\subsection{Expert Crowdsourcing Setting}
The expert crowdsourcing problem setting, at which this work is aimed, features some very particular characteristics that make it unique compared to other crowdsourcing problem settings:
\begin{enumerate}
\item \textbf{Heterogeneous rather than homogeneous tasks}. We work with crowdsourcing tasks that require different skills and skill levels from the workers, and that belong to multiple ``topics" (rather than a single one). Workers in this setting possess a set of skills, and are less replaceable and less abundant than crowdsourcing settings that consider homogeneous tasks and skills (everyone can do every task).

\item \textbf{Macro- rather than microtasks.} Whereas microtasks accept binary or n-ary (multiple choice) worker input, and their quality is defined by assigning multiple simultaneous workers on the task and then performing majority voting, macrotasks feature open-ended worker input (e.g. write a product description), and their quality is built by one worker contribution after the other (sequentially rather than simultaneously). 

\item \textbf{Online rather than offline.} Rather than working on a simplified offline setting, where the pool of workers and/or tasks are a-priori known, we consider an online setting, where workers demonstrate a dynamic flow of arrivals and departures and tasks arrive in an unpredictable manner. Any sequencing decision must be made based on task/worker information available up to the specific point in time.

\item \textbf{Time-constrained rather than only cost/quality-constrained tasks.} In addition to the need of achieving a certain utility metric (e.g. quality, number of acceptable tasks etc.) and the need to keep costs within budget, the online macrotasks of this setting have a deadline, i.e., they must also finish by a specific time point.
\end{enumerate}

\paragraph{Application areas.}
Many tasks and platforms, especially recent ones, fit the above \textit{expert crowdsourcing} setting and could benefit from its optimization. The first example are platforms such as upWork\footnote{https://www.upwork.com/} that work with freelancer experts on creative tasks such as web design and development, document writing, or coding. Social innovation platforms such as OpenIDEO\footnote{https://openideo.com/} or creative product design platforms like Quirky\footnote{https://www.quirky.com/}, where users build on one another's ideas, could also directly benefit from the optimization of the above setting. Finally collaborative document editing applications, such as corporate wikis~\cite{Lykourentzou201018}, where it is possible to sequence worker contributions along a timeline could significantly improve from our approach.

\subsection{The importance of the time element}
Before giving the formal definition of the \kic optimization problem, we illustrate through an example the importance of adding the perspective of time, and 
how this addition has a significant on performance in expert crowdsourcing.

\paragraph{Example.} Suppose there are only two jobs given, both from the same knowledge domain. Each job $j = (Q,C)$ has a quality threshold $Q$ that needs to be reached, and a cost threshold $C$
that must not be exceeded. For this example suppose
$$
j_0 = (5,5) \text{~~and~~} j_1=(4,4).
$$
On the other hand, each worker $i=(e,w)$ has an expertise $e$ that increases the quality of a job, and a wage $w$ that consumes this job's budget.
Let us assume that three workers
$$
i_0 = (2,3), i_1=(3,2) \text{~~and~~} i_3=(2,1)
$$
are given. Then each job has two possible assignments within budget and with sufficient quality:
\begin{eqnarray*}
j_0 &:& \{i_0,i_1\} \text{~~or~~} \{i_1,i_2\}\\
j_1 &:& \{i_0,i_2\} \text{~~or~~} \{i_1,i_2\}
\end{eqnarray*}
For both jobs the latter assignment seems to be preferable over the other since workers $\{i_1,i_2\}$ provide more quality for less cost.
Now we look at a sequencing period of three timeslots with limited worker availability as follows (both jobs are released immediately):
\begin{center}
\begin{tabular}{rccc}
  timeslot & $0$ & $1$ & $2$  \\ \hline
  $i_0$ & ~ & ~ & $\times$ \\
  $i_1$ & ~ & $\times$ & ~ \\
  $i_2$ & $\times$ & ~ & $\times$ \\
\end{tabular}
\end{center}
Since worker $i_1$ is available only at a single timeslot it is clear that at most one job can realize the preferable assignment mentioned above.
So assume for the moment that we choose modestly $j_0 \leftarrow \{i_0,i_1\}$ for $j_0$.
This gives us the following partial schedule for the workers:

\begin{center}
\begin{tabular}{rccc}
  timeslot & $0$ & $1$ & $2$  \\ \hline
  $i_0$ & ~ & ~ & $j_0$ \\
  $i_1$ & ~ & $j_0$ & ~ \\
  $i_2$ & $\times$ & ~ & $\times$ \\
\end{tabular}
\end{center}
But now none of the two feasible assignments for $j_1$ can be realized since only worker $i_2$ remains available. 
Although the assignment $j_1 \leftarrow \{i_2\}$ is within budget, it does not reach the needed quality, 
and $j_1$ remains incomplete in this case.

So let us choose the alternative assignment $j_0 \leftarrow \{i_1,i_2\}$ and set $i_2$ on $j_0$ at timeslot $0$:
\begin{center}
\begin{tabular}{rccc}
  timeslot & $0$ & $1$ & $2$  \\ \hline
  $i_0$ & ~ & ~ & $\times$ \\
  $i_1$ & ~ & $j_0$ & ~ \\
  $i_2$ & $j_0$ & ~ & $\times$ \\
\end{tabular}
\end{center}
Now $j_1$ cannot be completed without violating the sequential working assumption w.r.t.~this job.
On the other hand, if we set $i_2$ on $j_0$  at timeslot $2$, we can complete both jobs with the schedule 
\begin{center}
\begin{tabular}{rccc}
  timeslot & $0$ & $1$ & $2$  \\ \hline
  $i_0$ & ~ & ~ & $j_1$ \\
  $i_1$ & ~ & $j_0$ & ~ \\
  $i_2$ & $j_1$ & ~ & $j_0$ \\
\end{tabular}
\end{center}
without violating any constraints.
To complete the discussion, note that if we choose  $j_1 \leftarrow \{i_1,i_2\}$ in the beginning, then $j_0$ cannot be completed no matter what timeslot is used for $i_2$
 \textbf{(end of example)}.

The example shows that not only the choice of an optimal worker-task assignment without consideration of time may be misleading, but also that the specific selection of timeslots is important.

\subsection{The \kic Problem Model}
With the following definition we want to capture the interplay between task assignment \emph{and} timeline sequencing within the same model, and add appropriate constraints. We refer to our problem description as {\sc task assignment and sequencing} (\kic) in expert crowdsourcing.
\subsubsection{Input Data}
\textbf{Scheduling period}.
Suppose we look at $t$ timeslots $[t]=\{0,1,\ldots, t-1\}$. Each timeslot $d\in [t]$ is also called a \emph{day} but can be any fixed period of time.

\textbf{Knowledge domains}. A finite set $K$ of know\-ledge domains. Each $k\in K$ represents an area of know\-ledge or a knowledge topic. 

\textbf{Workers}. A finite set $U$ of users, hereby refered to as workers, participating in the crowdsourcing platform. Each worker $i \in U$ has the following characteristics\footnote{Note that in the context of this work the quantification of worker expertise, wage or speed are considered orthogonal to the studied assignment and sequencing problem. The interested reader is referred to~\cite{Lykourentzou201018,Dalip:2011:AAD:2063504.2063507,Ipeirotis:2014:QTC:2566486.2567988} for available worker profile quantification techniques based on machine-learning, implicit evaluation or information theory.}:
\begin{itemize}
\item \textit {Expertise}. An expertise vector $e_i$ of dimension $|K|$. The expertise $e_{ik}$ of a worker denotes the added quality that the worker can bring to a job belonging to domain $k$. 
\item \textit{Wage}. A cost  vector $w_i$ of dimension $|K|$.  The amount $w_{ik}$ is the monetary renumeration that the worker demands in order to perform a job  belonging to domain $k$. 
\item \textit {Availability}. An availability vector $a_i$ of dimension $t$ with entries $a_{id}=1$ if worker $i$ is available on day $d$,  and  $a_{id}=0$ otherwise.
\end{itemize}

\textbf{Jobs}. A finite set $J$ of knowledge-intensive jobs that are crowdsourced. A job $j$ is assumed to have the following characteristics:
\begin{itemize}
 \item \textit {Domain}. Each job belongs to exactly one domain $k_j\in K$. 
\item  \textit {Quality threshold}.  The amount $Q_j$ is the minimum quality that the job needs to achieve. 
\item \textit {Cost threshold}. The budget for job $j$ is given by $C_j$ as the maximum total amount of money that can be paid for the job. 
 \item \textit {Release date}. Each  job has a release date $r_j\in [t]$ which means that
	 job $j$ enters the crowd system at timeslot $r_j$ (and never leaves the system).
\end{itemize}
\textbf{Sequentiality}. Finally our model assumes a \emph{sequential work mode} along the timeline, according to which workers build on one another's contributions, at most one worker can be assigned to a task simultaneously, and each worker contributes at most to a single task at a time. 
Sequentiality is chosen for three reasons. First it is often imposed by the nature of expert crowdsourcing macrotasks, which are not easily decomposable to microtask level and as such they do not allow multiple simultaneous worker contributions (e.g. writing a document cannot be done by decomposing it to sentence level). 
Second, sequentiality allows building on the task's quality while not necessitating worker concurrency, which in practice is more difficult to achieve when specific worker skills (i.e. experts on a topic) are required. Third, sequentiality allows a more realistic coupling of our approach with worker skill evaluation mechanisms, making it easier to accurately evaluate the quality that each worker has brought once she has finished working on a task. Nevertheless, as also discussed in section~\ref{extensions} an extension of our model to include worker concurrency is feasible and we aim to examine it as part of our future work.

\subsubsection{Feasible Solutions, Constraints and Optimization Goal}
A schedule needs to carry information about the resource allocation for each job in terms of workers and in terms of time: When does what worker contribute to which job?

\textbf{Solutions}. In a solution for input data $x = (t,K,U,J)$ we have for each job $j\in J$ a vector $U_j$ of dimension $t$ with entries from $U\cup\{\nn\}$.
If $U_{jd}=i$ then worker $i$ is assigned to job j and scheduled on day $d$, and if $U_{jd}=\nn$ then there is no worker assignment for job $j$ on day $d$.

Note that we represent solutions hereby as job/timeslot-schedules with worker entries, whereas in the previous example we utilized an equivalent worker/timeslot-representation with
job entries.
So the successful schedule from the example in the present notation is
\begin{eqnarray*}
U_0 & = & (\nn, i_1, i_2)\\
U_1 & = & (i_2,\nn, i_0)
\end{eqnarray*}

\textbf{Constraints}. A solution is called \emph{feasible} if and only if the following holds:

\begin{enumerate} [label=(\alph*)]
\item \label{c1} No worker is assigned to more than one job at a time, i.e., for all $d\in [t]$ and distinct $j,j' \in J$ 
	 it holds that $U_{jd} \neq U_{j'd}$ (unless both values are \nn).
\item \label{c2} No job is assigned to more than one worker at a time, i.e., for all $j\in J$ and $d\in [t]$ there is at most one worker stored in $U_{jd}$.
	This is ensured by the representation of $U_{j}$. 
\item \label{c3} No worker is assigned more than once to the same job, i.e., for all $j\in J$ and distinct $d,d' \in [t]$ 
	 it holds that $U_{jd} \neq U_{jd'}$ (unless both values are \nn).
\item \label{c4} No worker is scheduled on a day where she is not available, i.e., for all $d\in [t]$ and $j \in J$ it holds that
		if $U_{jd}=i$ then $a_{id}=1$.
\item \label{c5} No job is worked on before its release date, i.e., for all $j \in J$ and $d < r_j$ it holds that $U_{jd}=\nn$.
\item \label{c6} No job exceeds its budget, i.e., for all $j \in J$ it holds that $c_j \leq C_j$ where $c_j$ is the \textit{cost of job $j$} defined as $c_j=\sum_{i\in U_j} w_{ik}$
		if $j$ has domain $k$.

\end{enumerate}
Note that there always exists a trivial feasible solution with $U_{jd}=\nn$ for all $j,d$.

\textbf{Objective}.
In order to assess the quality of a feasible solution $y = \{ j \mapsto U_j ~|~ j\in J \}$ we 
determine for each $j$ with domain $k$ the \textit{quality of job $j$} w.r.t.~this solution as
$q_j=\sum_{i\in U_j} e_{ik}$.
Note that in the context of this work we define task quality as the sum of expertises of the workers that participate in it, using the additive skill aggregation model that is often used for expert sequential macrotasks such as document editing~\cite{Anagnostopoulos:2012:OTF:2187836.2187950,Lykourentzou:2013:IWA:2584064.2584070}. Other aggregation functions, including minimum, maximum or 
product~\cite{Anagnostopoulos:2012:OTF:2187836.2187950} could also be used to compute a task's quality, however their full examination is out of the scope of this work.

Now we set the measure for input $x=(t,K,U,J)$ and solution $y$ to
$$
m(x,y) = |\{ j \in J ~|~ q_j \geq Q_j \}| 
$$
which we want to maximize.
Therefore we count the number of jobs that reach their quality threshold within budget and that can be scheduled in a feasible way w.r.t.~constraints \ref{c1} to \ref{c6}.
We call such jobs \emph{completed}.

\subsection{TAS: An allocation problem with two aspects}
The \kic optimization problem combines aspects of two well-studied problems of different nature, reflecting resource allocation of workers on one hand, and
allocation of timeslots on the other.

\subsubsection{Allocation of Workers: The Multiple Knapsack perspective}
If we look only at worker allocation in our model, we can understand each job $j$ of domain $k$ with budget $C_j$ as a knapsack of this size that we need to fill with 
worker's expertises $e_{ik}$. Since the worker availabilities restrict the number of times a single worker can be packed, we have a \emph{bounded} version of
the {\sc multiple knapsack} problem~\cite{Kellerer:2013wj}. The difference to this classical problem is the optimization goal. While in \kic we want to maximize the number of completed jobs with respect to their individual quality thresholds $Q_j$, the goal in {\sc multiple knapsack} is to maximize the sum of all packed expertises, no matter how these spread over the different knapsacks.

\subsubsection{Allocation of time slots: The Openshop perspective}
On the other hand, let us suppose a worker-task-assignment is already fixed such that all jobs reach their quality thresholds, and we need to schedule the selected workers along the timeline with respect to job releases and worker availabilities. Then we can understand this partial problem as a machine-scheduling problem: Here workers play the role of machines and jobs need to be processes on these machines. 
Observe that the order of processing is immaterial in our model, that we demand sequentiality, and that the processing time of a job on a certain machine is either $0$ or $1$ per timeslot (depending on whether the respecting worker is assigned to this job or not). So this aspect of \kic is a {\sc unittime openshop} problem with limited machine-availability and job release-dates \cite{Kravchenko:2000vu}. Note that the adoption of a model also implies non-preemption, i.e. a worker cannot be interrupted once he/she has started working on a task.
The goal of maximizing the number of completed jobs translates to minimizing the number of late jobs if we set $t$ as the gobal deadline.
We also want to mention that the sequencing of an already fixed worker-task-assignment can be reduced to the {\sc bipartite list edge-coloring} problem \cite{4567876}.
Here jobs and workers form a bipartite graph with the worker-task-assignments as it's edges, and timeslots are represented by colors. Then we
assign a list of colors to each edge $(j,i)$ such that worker $i$ is available on these timeslots and job $j$ is already released. 
A proper coloring of all edges can be found if and only if the previously fixed worker-task-assignment can be sequenced on the timeline.

\subsubsection{TAS Complexity}
Both aspects of \kic that we have pointed out above are NP-hard on their own, so is \kic as we show below.
For an upper complexity bound note that the length of \kic-solutions is polynomially bounded in the input length and that the constraints can be checked in
polynomial time if a solution is given, so \kic is an NP-optimization problem.
Moreover, we observe that \kic is a large number problem, since it has {\sc knapsack} as a subproblem (if there is only a single job and each worker is available on a different single day).
So it is reasonable to consider strong NP-hardness.

\newcommand{\dm}{$\textrm{\sc 3-DM}$}
\newcommand{\dmlong}{$\textrm{\sc 3-Dimensional Matching}$}

\begin{theorem} \label{thm1}
\kic~is a strongly NP-hard optimization problem.
\end{theorem}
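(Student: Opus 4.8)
The plan is to reduce from \dmlong\ (\dm), which is \emph{strongly} NP-complete since it contains no numerical parameters whatsoever, to the natural threshold version of \kic: given an instance $x$ and a bound $B$, is $\mathrm{OPT}(x)\ge B$? Reducing from \dm\ rather than from the embedded {\sc knapsack} subproblem is essential here, because {\sc knapsack} is only weakly NP-hard and a reduction from it would not rule out a pseudopolynomial algorithm. Recall that a \dm\ instance consists of pairwise disjoint sets $X,Y,Z$ with $|X|=|Y|=|Z|=n$ and a set $T\subseteq X\times Y\times Z$ of triples, and asks whether some $M\subseteq T$ covers every element of $X\cup Y\cup Z$ exactly once.

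First I would construct the following \kic instance. Take $t=3$ timeslots; introduce one knowledge domain $k_\tau$ for each triple $\tau\in T$; create $3n$ workers, one per element of $X\cup Y\cup Z$, all with wage $0$ in every domain (so constraint~\ref{c6} is vacuous), with the workers associated to $X$, $Y$ and $Z$ available only on day $0$, $1$ and $2$ respectively, and with expertise $e_{i,k_\tau}=1$ exactly when the element of worker $i$ occurs in $\tau$ and $e_{i,k_\tau}=0$ otherwise; finally create one job $J_\tau$ per triple, with domain $k_\tau$, release date $0$, quality threshold $Q_{J_\tau}=3$, and budget $C_{J_\tau}=0$. Set $B=n$. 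Every number produced is bounded by a polynomial in the size of the \dm\ instance (indeed, almost all of them are $0$, $1$ or $3$), so a correct reduction of this form transfers \emph{strong} NP-hardness to \kic.

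Next I would verify correctness, including a matching upper bound. The only workers with positive expertise in domain $k_\tau$ are the three whose elements occur in $\tau$; since $Q_{J_\tau}=3$, each of them contributes exactly $1$, and none may be assigned to $J_\tau$ twice (constraint~\ref{c3}), a completed job $J_\tau$ must use precisely those three workers, necessarily placed on the three distinct days they are available. Since each element-worker is available on a single day, constraint~\ref{c1} prevents it from serving two jobs; hence the triples of any family of completed jobs are pairwise disjoint, so at most $n$ jobs can ever be completed, and $n$ completed jobs yield $n$ pairwise disjoint triples covering all $3n$ elements, i.e.\ a perfect matching. Conversely, from a perfect matching $M$ one builds a schedule by assigning, for each $\tau=(x,y,z)\in M$, the workers of $x$, $y$ and $z$ to $J_\tau$ on days $0$, $1$ and $2$; a routine check of \ref{c1}--\ref{c6} shows it is feasible and completes exactly the $n$ jobs $\{\,J_\tau : \tau\in M\,\}$. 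Therefore $\mathrm{OPT}(x)\ge n$ if and only if the \dm\ instance admits a perfect matching, which, together with \kic being an NP-optimization problem (already observed above), yields strong NP-hardness.

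I expect the real difficulty to be the joint design of the gadget rather than any individual verification step. The expertise vectors and the per-triple domains must encode ``which workers a job is allowed to be built from'', while worker availabilities together with the sequentiality and one-worker-per-slot rules must independently encode ``two triples sharing an element cannot both be realized'', and these two mechanisms must be kept from interfering. Pinning the optimum to \emph{exactly} $n$ --- i.e.\ obtaining the clean upper bound, not merely a lower bound from the forward direction --- is the delicate part; once the construction is fixed, what remains is the routine checking of constraints~\ref{c1}--\ref{c6}.
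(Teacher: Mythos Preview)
Your proposal is correct and follows essentially the same reduction as the paper: both reduce from \dm\ using three timeslots, one domain and one job per triple with quality threshold $3$, and element-workers whose availability is confined to a single day by the coordinate they come from. The only cosmetic difference is that you neutralize the budget constraint via zero wages and zero budget, whereas the paper sets all relevant wages to $1$ and each budget to $3$; the correctness argument is otherwise the same, and your write-up is in fact a bit more explicit about the matching upper bound of $n$ completed jobs.
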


\begin{proof}
We show NP-hardness with a polynomial-time many-one reduction from the NP-complete 
problem \dmlong~\cite{Karp:1972tu}. 
For finite, disjoint sets $X$, $Y$ and $Z$ we say that $M\subseteq X\times Y\times Z$ is a 3-dimensional matching
if for all distinct triples $(x_1,y_1,z_1), (x_2,y_2,z_2) \in M$ it holds that $x_1\neq x_2$, $y_1\neq y_2$ and $z_1\neq z_2$.
It is known that \dmlong~is NP-complete even in the special case when •
$|X| = |Y| = |Z| = u$ and $M$ has to be a perfect matching with $|M|=u$.



\vspace{0.3cm}
\noindent\dmlong~(\dm)

\noindent
 \begin{tabular}{@{}lp{5.5cm}}
          \bf{Input:} & Finite and disjoint sets $X, Y, Z$ with $|X| = |Y| = |Z|$ and a subset 
          $W \subseteq X \times Y \times Z$.\\
          \bf{Question:} & Is there a perfect 3-dimensional matching $M\subseteq W$ ?
\end{tabular}

Suppose an instance of \dm~is given with $X=\{x_i ~|~ i\in[u]\}$, $Y=\{y_i ~|~ i\in[u]\}$, $Z=\{z_i ~|~ i\in[u]\}$
for some $u\geq 1$ and $W \subseteq X \times Y \times Z$. 
The idea is to use constraint \ref{c1} (no worker is assigned to more that one job at a time) to achieve the needed matching condition.
We take elements from $X$ ($Y$, $Z$) as workers available on day $0$ ($1$, $2$, resp.) and use domains to fix the given triples from $W$.
More precisely, we define a corresponding \kic-instance $(t,K,U,J)$ as follows:
\begin{itemize}
\item The scheduling period has $t=3$ timeslots.
\item There are $|W|$ many different domains in $K$.
\item Each triple $w\in W$ is encoded as a job $j_{w}$, and all jobs have pairwise different domains. 
		 For all jobs $j_{w}$ we set quality and cost threshold to $Q_{j_w}=C_{j_w}=3$ and release date to $r_{j_w}=0$.
\item Workers are defined as $U=X\cup Y\cup Z$. For $x\in X$, $y\in Y$ and $z\in Z$ we set the availability to $a_x=(1,0,0)$, $a_y=(0,1,0)$ and $a_z=(0,0,1)$, respectively.
		To fix expertise and wage, we consider each triple $w=(x,y,z)\in W$ and the corresponding job $j_w$. 
		If $j_w$ has domain $k$ then we define $e_{xk}=e_{yk}=e_{zk}=1$ and $w_{xk}=w_{yk}=w_{zk}=1$.
		All entries in expertise and wage vectors that are not addressed hereby are set to $0$.
\end{itemize}
First observe that a job $j_w$ with $w=(x,y,z)$ reaches it's quality threshold if and only if we assign workers $\{x,y,z\}$ to this job, since 
exactly these workers contribute to the job's domain.

Now we argue that the given \dm~instance has a perfect matching $M$ if and only if the constructed \kic instance has a feasible solution with $|M|=u$ completed jobs.
If $M\subseteq W$ is a 3-dimensional matching, then we consider the \kic-solution $U_{j_w}=(x,y,z)$ for all $w=(x,y,z)\in M$.
Since $M$ is a matching all distinct solution vectors differ in all components, so constraint \ref{c1} is satisfied. 
All other constraints are easy to check, just note that each worker is available only on a single day, that all jobs are immediately released and 
that no job can exceed the budget. All jobs in this solution are completed due to our previous observation.

Conversely, note that if there is a feasible \kic-solution with completed jobs $j_w$ and $w = (x,y,z)$, then 
it must be that $U_{j_w}=(x,y,z)$.
Since constraint \ref{c1} holds, the solution vectors for any two distinct jobs differ in all components.
So $M=\{(x,y,z)~|~U_{j_w}=(x,y,z) \text{~and $j_w$ completed} \}$
is a 3-dimensional matching and $|M|=u$.

The reduction function maps only to \kic-instances where all integer values are polynomially bounded in the input length, so strong NP-hardness follows.
\end{proof}

This rules out the possibility of pseudo-polynomial algorithms and the existence of fully polynomial-time approximation schemes for \kic unless P equals NP.
Furthermore note that the reduction emphasizes the aspect of timeline sequencing, since worker-task-assignments in the constructed \kic-instance are trivial
(there is exactly one feasible worker-assignment possible to reach the quality threshold of each job).

\section{An Online Algorithm for \kic}
\label{algorithm}

Due to the dynamic nature of crowdsourcing systems, it seems not realistic to consider \kic as an \emph{offline} problem where algorithms are provided with the complete input at once. 
In fact, worker availabilities are hardly predictable and it is usually not known in advance which jobs will enter the system at what time.
So the problem of task assignment and sequencing is inherently \emph{online} in nature and sequencing decisions have to be taken without complete information about the input data.
We say that an algorithm for \kic has the \emph{online property}, if it processes the input in a serial way w.r.t. the timeline $d = 0,1,\ldots$ and 
in each step $d$ the algorithm has to take its assignment decisions while having access only to the time-dependent information of the input for timeslots $\leq d$.
These are the worker availabilities and the jobs released up to day $d$.
For more background on the general concept of online algorithms we refer to \cite{Borodin:2005wx}.

To design such an algorithm we start with the following observation:
Suppose a feasible solution $y$ for \kic is given. If we look at a single day $d$ in this solution then the assignments of workers to jobs
for that day form a bipartite matching between the (uncompleted) jobs with (remaining) quality needs and budget
	on one hand, and the set of available workers for that day on the other hand. Constraints \ref{c1} and \ref{c2} form exactly this bipartite matching condition.
	
So conversely, if we proceed day by day with our online algorithm, we can try to compute a matching between the active (= released but incomplete) 
jobs $J'$ in the system on that day, and the available workers $U'$ for that day. 
Note that due to this choice of $J'$ and $U'$ we also immediately satisfy constraints~\ref{c4} and \ref{c5}.
It remains to consider constraints \ref{c3} (no worker assignment to the same job twice) and \ref{c6} (no job exceeds it's budget).
Both can be taken care of when we construct the edges of possible assignments in the bipartite graph between $J'$ and $U'$: If the remaining budget for a job is smaller than
the wage of a worker in this domain, then the edge is omitted. 
The same is true if the worker has already been assigned to this job in the past.
Both conditions can be checked when looking at the partial solution for timeslots $< d$.
Together, this online procedure results in a series of matchings $M_d$ for $d=0,1,\ldots,t-1$ that form a feasible solution $y$ for \kic.

More than that, we want to choose a sequence of matchings that yields a large number of completed jobs.
Among all possible matchings for each day $d$, which is the right one? 
We propose a greedy approach and compute in each step a matching, such that the sum of \emph{profits} we get from the respective assignments 
for that day is maximized. 
More precisely, we construct for each day a \emph{weighted} bipartite graph where each possible assignment (edge) claims a certain profit.
In our algorithm, the profit is just the amount of expertise per wage unit (efficiency).
The problem {\sc max weighted bipartite matching} can be solved to optimality by known algorithms in polynomial time, e.g. if we apply the \emph{Hungarian Method} this step has a running time
proportional to $O((|J'|+|U'|)^2|E|)$ \cite{Kuhn:1955to}.
So we obtain the following online Algorithm~\ref{alg:tas-online} for \kic with polynomial running time $O(t|J|^3|U|^3)$.

\begin{algorithm}[ht]
 \caption{{\sc tas-online}}
  \label{alg:tas-online}
  \begin{algorithmic}[1]
    \Require{A \kic-instance $x=(t,K,U,J)$}
    \Ensure{A feasible solution $y$ for $x$.}
    \Statex
    \State \emph{Set $U_{jd} = \nn$ for all $j\in J$ and $d\in[t]$.}
     \For{$d=0,1,\ldots,t-1$}
 	  \Comment{proceed day-by-day}
	 \State \emph{$J' = $ uncompleted jobs with  $r_j\leq d$} 
	 
	 		\Comment{active jobs}
	 \State \emph{$U' = $ workers available on day $d$}   
	 
	  		\Comment{active workers}
	  \State $E= \emptyset $ \Comment{edge set in bip. graph}
	   \For{$(j,i)\in J' \times U'$}
		\If{$i \in U_j$} \algorithmicpass \Comment{ensures  \ref{c3}} \label{online:feasible1}
		\EndIf
	   	\If{$e_{ik_j}==0$} \algorithmicpass 
		
		\Comment{$i$ has no expertise in $j$'s domain}
		\EndIf
		\If{$w_{ik_j} > C_j - c_j$} \algorithmicpass \label{online:feasible2}
				\Comment{ensures  \ref{c6}}
		\EndIf
		\State $\mathit{profit} \gets e_{ik_j} / w_{ik_j}$
	   	 \State $E\gets (j,i, \mathit{profit})$
	   \EndFor
	    \State $M_d \gets $ \emph{MaxWeightedMatching($J',U',E$)}
	     \For{$(j,i)\in M_d$}
		     \State $U_{jd} = i$ 	\Comment{worker-task-assignment}
	   \EndFor
    \EndFor
    \State\Return{$\{ j \mapsto U_j ~|~ j\in J \}$}
  \end{algorithmic}
\end{algorithm}

This algorithm can be viewed as an online schema that allows multiple extension, which we discuss in the last section after some experimental evaluation using the 
present basic version.


\section{Experimental Evaluation}
\label{results}
With the hardness result we have already seen that \kic has the {\sc KNAPSACK} decision problem as a subproblem. It is known from literature that no competitive algorithm for the \emph{online} version of  {\sc KNAPSACK} exists where items arrive one at a time  \cite{MarchettiSpaccamela:1995ih}.
An online algorithm is called competive if the ratio of it's performance and an optimal offline algorithm's performance can be bounded, a usual performance measure for online algorithms \cite{Borodin:2005wx}. 
It follows that no competitive online algorithm for \kic exists as well. Therefore, in order to evaluate {\sc tas-online} experimentally, we formulate alternative algorithms to compare with.
\subsection{Experimental Setup}
\subsubsection{Benchmarks}
We evaluate the performance of {\sc tas-online} using four benchmarks, with each benchmark extending the previous with a new functionality. 
The first version of the  algorithm ({\sc random}) 
builds a feasible solution randomly and without any individual preferences of workers that usually appear in a fully self-organized system.
We simply iterate over the available workers and pick a feasible job.

\begin{algorithm}[ht]
 \caption{{\sc random}}
\label{alg:random}
  \begin{algorithmic}[1]
    \Require{A \kic-instance $x=(t,K,U,J)$}
    \Ensure{A feasible solution $y$ for $x$.}
    \Statex
    \State \emph{Set $U_{jd} = \nn$ for all $j\in J$ and $d\in[t]$.}
     \For{$d=0,1,\ldots,t-1$}
 	  \Comment{proceed day-by-day}
	 \State \emph{$J' = $ uncompleted jobs with  $r_j\leq d$} 
	 
	  		\Comment{active jobs}
	 \State \emph{$U' = $ workers available on day $d$}  
	 
	 		\Comment{active workers}
	   \While{$U'\neq \emptyset$}
	   	 \State \emph{pick worker $i\in U'$ randomly, remove it}
		 \State \emph{$J'_i$ = feasible jobs for worker $i$} \label{random:feasible}
		 \State \emph{pick job $j\in J'_i$ randomly}  \label{random:pick}
		 \State $U_{jd} = i$ 	\Comment{worker-task-assignment}
		 \If{$q_j\geq Q_j$} \emph{remove $j$ from $J'$} 
		 
		 				\Comment{remove completed jobs}
		\EndIf

	   \EndWhile
    \EndFor
    \State\Return{$\{ j \mapsto U_j ~|~ j\in J \}$}
  \end{algorithmic}
\end{algorithm}

To obtain the feasible jobs for $i$ in line~\ref{random:feasible} we proceed as in lines~\ref{online:feasible1} to \ref{online:feasible2} in {\sc tas-online}
and additionally check that $j$ is still without worker assignment for that day.

For the next version of the algorithm ({\sc random egoistic}) we assume that workers try to realize a larger wage with priority, thus modeling a typical crowdsourcing environment,
where workers are self-appointed to tasks, trying to maximize their individual profit~\cite{DBLP:conf/icwsm/RogstadiusKKSLV11}. To do so, we substitute  line~\ref{random:pick} in the previous algorithm 
with the lines stated in Algorithm~\ref{alg:randego}.

\begin{algorithm}[ht]
 \caption{{\sc random egoistic}}
 \label{alg:randego}
  \begin{algorithmic}[1]
  \makeatletter
\setcounter{ALG@line}{79}
\makeatother
		 \State \emph{let $k_0, k_1, \ldots $ be the domains sorted decr.~by $i$'s wage}
		    \For{$k=k_0,k_1,\ldots$}
		    	\State \emph{$J'_i$ = feasible jobs for worker $i$ from domain $k$}
			 \If{$J'_i \neq \emptyset$} 
			 	\State \emph{pick job $j\in J'_i$ randomly}
				\State \algorithmicbreak		 
			\EndIf
		    \EndFor
  \end{algorithmic}
\end{algorithm}

In the next step ({\sc random egoistic filter}), we extend {\sc random egoistic} with a filter that restricts the jobs that are offered to each worker based on expertise. This models the practice employed by many crowdsourcing platforms today, where workers can only access a task if they successfully pass a ``screening" (realized through the use of performance levels, golden data, reputation, or other means across the different platforms)~\cite{Downs:2010:YPG:1753326.1753688,Josang:2007:STR:1225318.1225716}, which allows to expect a substantial contribution to the job's quality by these workers.
This ``screening threshold" is expressed by an additional parameter $0<\mathit{factor}<1$ which determines the minimal expertise needed. Therefore we additionally substitute  line~\ref{random:feasible} 
with the following lines.

\begin{algorithm}[ht]
 \caption{{\sc random egoistic filter}}
  \begin{algorithmic}[1]
  \makeatletter
\setcounter{ALG@line}{69}
\makeatother
		 \State \emph{$J'_i$ = feasible jobs for worker $i$}
		 \State \emph{remove all $j$ from $J'_i$ with $e_{ik} < (Q_j \cdot  \mathit{factor})$} 
  \end{algorithmic}
\end{algorithm}

As as last variation of Algorithm~\ref{alg:random} we choose a job for some worker completely deterministically with a greedy rule: 
Job $j$ is assigned to worker $i$ if the marginal contribution in terms of quality is maximal among all feasible jobs for worker $i$.
This amounts to replacing line~\ref{random:pick} in Algorithm~\ref{alg:random} by the following line (an leave line~\ref{random:feasible} unchanged).

\begin{algorithm}[ht]
 \caption{{\sc online greedy}}
  \begin{algorithmic}[1]
  \makeatletter
\setcounter{ALG@line}{79}
\makeatother
		\State \emph{pick job $j\in J'_i$ such that $e_{ik_j} - q_j$ is maximal}
  \end{algorithmic}
\end{algorithm}


Note that all algorithms so far have the online property for \kic.
Finally, for reasons of comparison, we use an \emph{offline} algorithm (Algorithm~\ref{alg:tas-offline}) that does not have to proceed day-by-day but has access to the complete input at once.
So this clairvoyant algorithm knows in advance what workers will be available on what days of the scheduling period, and also what jobs will eventually enter the system.
It proceeds job-by-job and treats each job as a knapsack that has to be packed with workers (items) that are available after the job's release date. 
To obtain such a packing, it calls an optimal algorithm for {\sc max knapsack} that returns a packing with minimal cost such that the quality threshold is reached.
Then, for the workers from this packing (= worker-task assignment), a sequencing on the timeline w.r.t. their availability is fixed, before the next job is considered.

The algorithm has two more parameters that influence the way workers are selected for input to the knapsack algorithm for a job $j$. With $\mathit{lookahead}$ we specify the
interval of timeslots $[r_j, r_j\!+\!\mathit{lookahead}]$  from which the available workers are chosen in order to control the maximum flow time of each job.
Secondly, we use $\mathit{minavail}$ to ensure that each worker has at least $\mathit{minavail}$-many free timeslots remaining in the above interval in order to facilitate
the allocation of timeslots afterwards.

\begin{algorithm}[ht]
 \caption{{\sc tas-offline}}
\label{alg:tas-offline}
  \begin{algorithmic}[1]
    \Require{A \kic-instance $x=(t,K,U,J)$}
    \Ensure{A feasible solution $y$ for $x$.}
    \Statex
    \State \emph{Set $U_{jd} = \nn$ for all $j\in J$ and $d\in[t]$.}
     \For{$j \in J$}
 	  \Comment{ordered by release dates}
	 \State \emph{$U_j' = $ feasible workers for $j$}
	  \For{$i \in U_j'$}
	  	 \If{\algorithmicnot $\mathit{minavail} \,\text{in}\, [r_j, r_j\!+\!\mathit{lookahead}]$} 
		  \State  \emph{remove $i$ from $U_j'$}
	  	\EndIf
	  \EndFor
	  \State $W_j \gets $ \emph{DynProgKnapsack($U'_j,Q_j, C_j$)}
	   \For{$i\in W_j$} 			\Comment{sorted decr.~by expertise}
	   	     \State  \emph{$d = $ earliest available timeslot $\geq r_j$ for $i$}
		     \State $U_{jd} = i$ 	\Comment{worker-task-assignment}
		     \State $a_{id} = 0$ 	\Comment{set $i$ unavailable on $d$}
	   \EndFor
    \EndFor
    \State\Return{$\{ j \mapsto U_j ~|~ j\in J \}$}
  \end{algorithmic}
\end{algorithm}

The offline algorithm does not guarantee optimal solutions for \kic for various reasons. 
However, it is designed to complete as many jobs as possible by the particular use of off\-line information and by incorporating optimal solutions to the knapsack subproblems.
Note that due to the standard dynamic-programming (DP) algorithm for   {\sc max knapsack} this is only a 
 pseudo-polynomial time algorithm (the DP-table has dimension $|U|\times C_j$ for each job) \cite{Kellerer:2013wj}.
While we observe that this still yields tolerable runtimes for realistic input sizes, it is also possible to scale down the range of cost thresholds, or to use
a fully polynomial-time approximation-scheme instead, if runtime becomes crucial.

\subsubsection{Evaluation Metrics and Experiments Overview}
To evaluate our approach we compare the algorithms principally in terms of the objective function value (i.e. the metric that the \kic model is meant to optimize), both as an absolute number and as a percentage of the upper bound of completable jobs. We also use four auxiliary metrics, meant to provide more information on the algorithm's behavior: the number of assigned workers, flow time, budget utilization and quality reached. 

We conduct two types of experiments: i) synthetic (sections~\ref{synthetic} and~\ref{scalability}), where we experiment with a known simulated crowdsourcing instance and its variations and ii) real-world (section~\ref{realexp}), where we examine our model on an actual crowdsourcing platform. 

\subsection{Synthetic Data Experiment}
\label{synthetic}
\subsubsection{Simulation parametrization}
\label{parametrization}
We first experiment with synthetic data, which were generated using the experimental result distributions reported in~\cite{roy2015}, where AMT workers worked on the complex task of news writing. For simplicity, all modeling elements were generated in the $[0.1]$ scale. Worker expertise received a random value from a normal distribution with mean equal to $0.5$ and a variance $0.15$, while worker wage received a random value from a normal distribution with mean equal to $0.5$ and variance $0.2$. For this set of experiments worker acceptance was set equal to $1$. Job quality threshold was modeled using a beta distribution with $\alpha=5$, $\beta=1$, so that most jobs require a quality of at least $0.6$ of $1$ and higher with only a tail of jobs requiring less. Job cost threshold was then modeled as linearly related to job quality. Worker and job arrivals were modeled as Poisson processes with an average $\lambda=200$ worker/day, and $\mu=20$ jobs/day, respectively. Overall, we simulated a timeline of $30$ days, during which $1000$ workers (re)entered the system and $600$ jobs were requested, belonging to $10$ knowledge domains.
So we have the following numbers in terms of our model:

\begin{center}
\begin{tabular}{|c|c|c|c|}
 $t$ & $|K|$ & $|U|$ & $|J|$  \\ \hline
  $30$ & $10$ & $1000$ & $600$ \\
\end{tabular}
\end{center}

\subsubsection{Upper bound calculation}
\label{upper_bound}
First we compute an \emph{upper bound} on the number of ultimately completable jobs using the optimal DP-algorithm for {\sc max knapsack}:
Assume for each job that this job is the first for which we compute a worker-task assignment, i.e., all workers with at least one available timeslot $\geq r_j$ are possible
knapsack items regardless of any other assignments. Now if the DP-algorithm does not find a packing within budget and above the quality threshold
with this input data, then this job cannot be completed whatsoever.
For the present instance, it turns out that at most $515$ out of the $600$ jobs can be completed.


\subsubsection{Quality experiments}
\label{quality_experiments}
We now conduct the quality experiments. In terms of the objective function, we observe that {\sc tas-online} does not reach the number of completed jobs of our offline algorithm, but that it is significantly better that the other online algorithms 
(cf. Table~\ref{table_obj}).
%

\begin{table}[ht]
\caption{Objective function (completed jobs)}
\begin{center}
\begin{tabular}{l|c|c}
Algorithm & absolute & $\%$ of bound \\ \hline
{\sc random} & $2$ & $0,39$ \\
{\sc random egoistic} & $98$ & $19,03$ \\
{\sc random ego. filter} & $114$ & $22,14$ \\
{\sc online greedy} & $82$ & $15,92$ \\
{\sc tas-online} & $\mathbf{355}$ & $\mathbf{68,93}$ \\
{\sc tas-offline} & $411$ & $79,81$ \\
\end{tabular}
\end{center}
\label{table_obj}
\end{table}%

To get a more precise picture, we want to compare these algorithms not only w.r.t. this single measure, but also look at other characteristics.
Next we ask how many workers are assigned to each job, and how long the flow times are, i.e., the number of timeslots
between release date $r_j$ and the latest assigned worker for $j$ (cf. Table~\ref{table_workers}).
In both cases we take the average values over all jobs in the system (not only the completed ones).

\begin{table}[ht]
\caption{Number of assigned workers and flow time}
\begin{center}
\begin{tabular}{l|c|c}
Algorithm & workers  & flow time \\ \hline
{\sc random} & $4,77$ & $4,77$ \\
{\sc random egoistic} & $3,46$ & $3,46$ \\
{\sc random ego.~filter} & $1,64$ & $7,37$ \\
{\sc online greedy} & $3,56$ & $2,91$ \\
{\sc tas-online} & $\mathbf{3,31}$ & $\mathbf{3,31}$ \\
{\sc tas-offline} & $2,41$ & $8,11$ \\
\end{tabular}
\end{center}
\label{table_workers}
\end{table}%

In cases where both values are the same, we only have compact assignments per job without any free slots in between.
While {\sc tas-online} seeks this type of assignments  we note that {\sc tas-offline} creates notable 
slack times, presumably a price to pay for larger number of completed jobs.

Now we state how much budget is used with these assignments, and how much quality is reached,
both relativ to the given thresholds and on average over all jobs in the system (cf. Table~\ref{table_budget}).

\begin{table}[ht]
\caption{Budget usage and reached quality in $\%$}
\begin{center}
\begin{tabular}{l|c|c}
Algorithm & budget  & quality   \\ \hline
{\sc random} & $88,16$ & $60,62$ \\
{\sc random egoistic} & $92,05$ & $90,27$ \\
{\sc random ego. filter} & $52,6$ & $55,77$ \\
{\sc online greedy} & $91,44$ & $87,12$ \\
{\sc tas-online} & $\mathbf{94,35}$ & $\mathbf{97,76}$ \\
{\sc tas-offline} & $67,73$ & $70,21$ \\
\end{tabular}
\end{center}
\label{table_budget}
\end{table}%

Due to its greedy nature {\sc tas-online} reaches very high quality values including for incompleted jobs and exploits the given budgets to a large extend.

Finally, we show how the main performance measures develop over the time, see Fig.~\ref{fig:completion_rate} for completed jobs and  Fig.~\ref{fig:quality_rate}  for reached quality.
Interestingly, we observe that the higher values in Fig.~\ref{fig:completion_rate} for {\sc tas-offline} appear towards the end of the scheduling period. 
A possible explanation is that the lookahead mechanism of this algorithms takes the end of the timeline into account.

\begin{figure}[h]
\center
\includegraphics[scale=0.48]{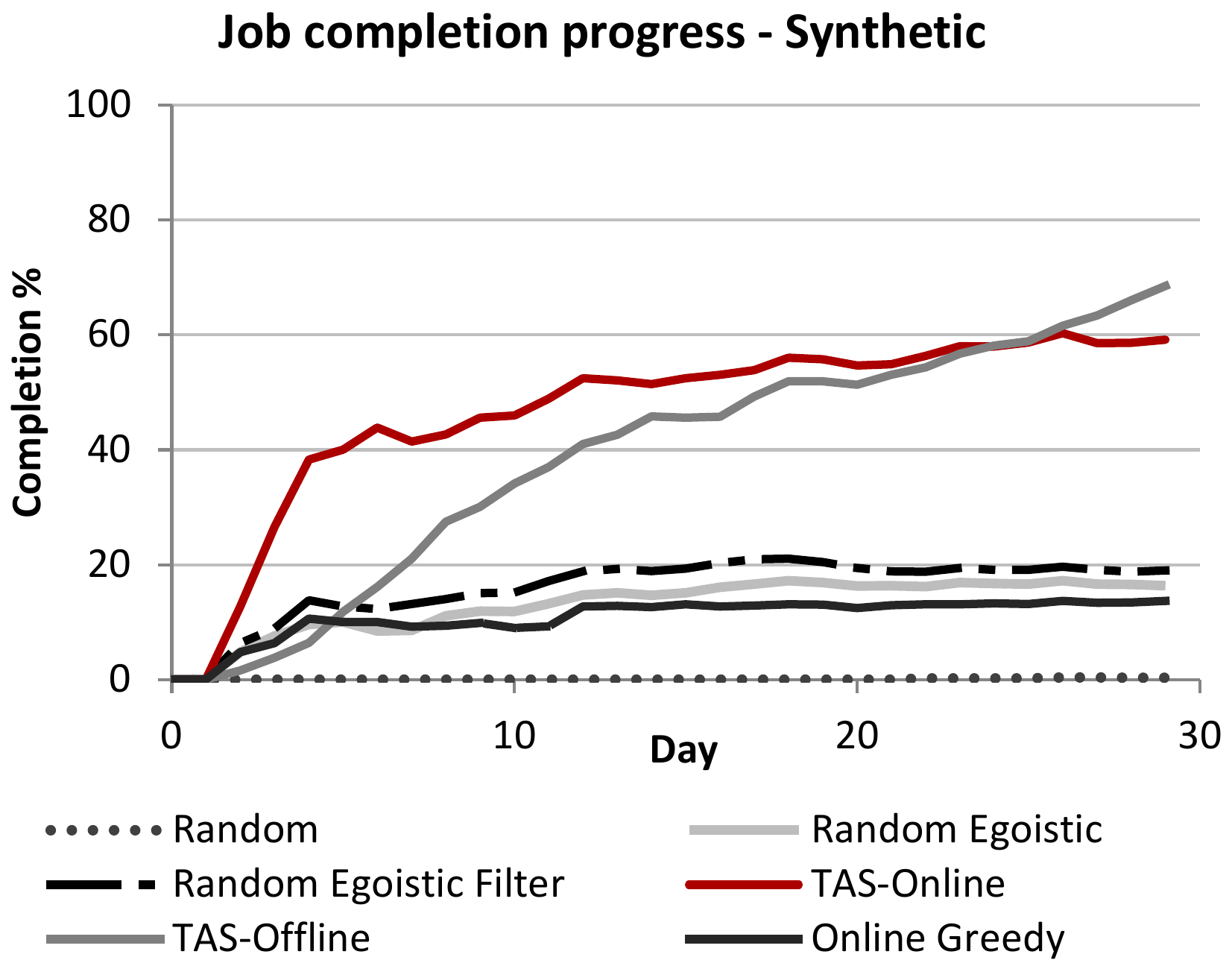}
\caption{Job completion over time for each of the five tested algorithms. The proposed algorithm {\sc tas-online} manages to achieve more completed jobs most of the time compared to its competitors. 
Time unit expressed in days. }
\label{fig:completion_rate}
\end{figure}

\begin{figure}[h]
\center
\includegraphics[scale=0.48]{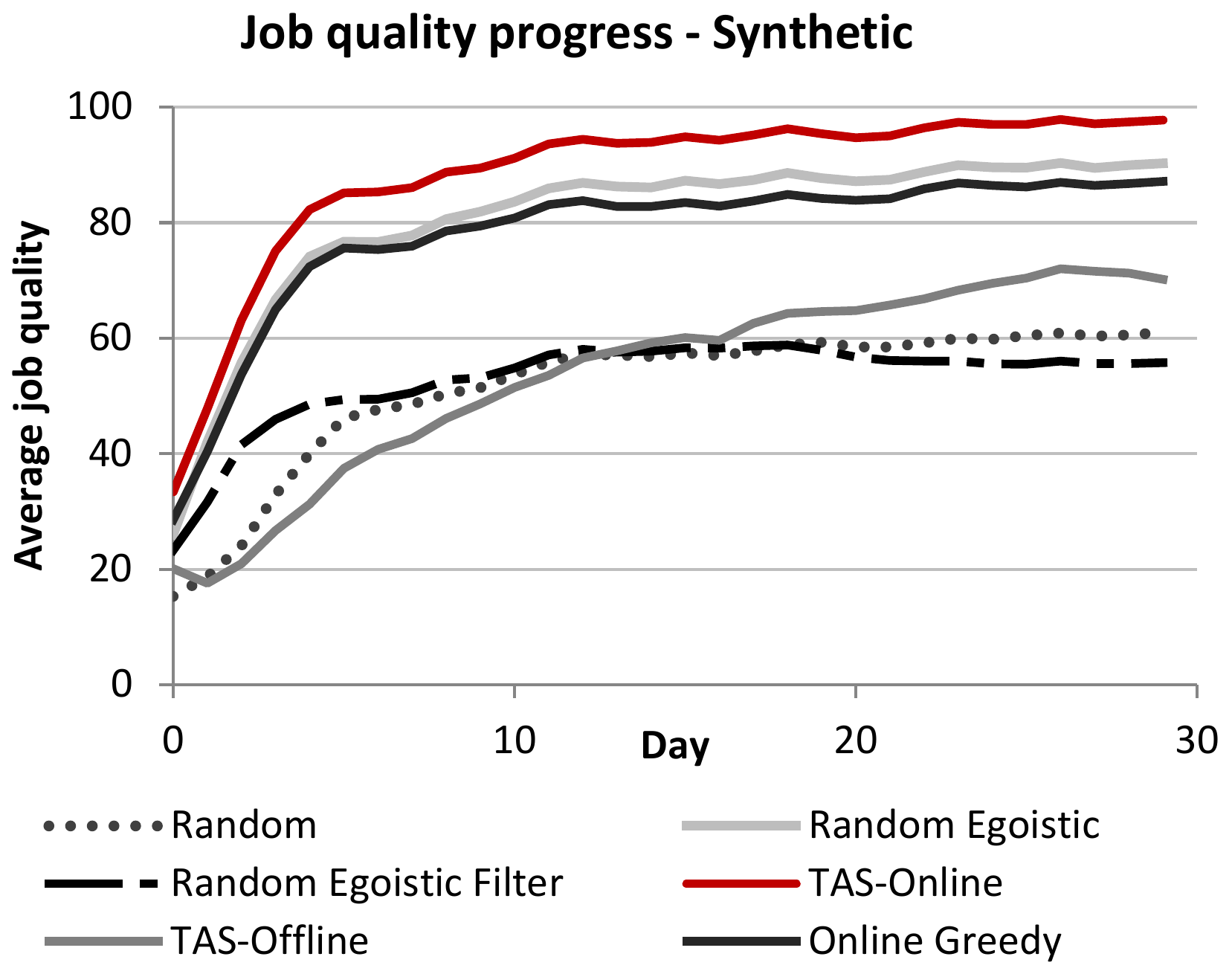}
\caption{Average job quality over time for each of the tested algorithms. The proposed algorithm {\sc tas-online}  manages to achieve higher quality per time unit compared to its competitors. Time unit expressed in days.}
\label{fig:quality_rate}
\end{figure}

\subsection{Scalability experiments}
\label{scalability}
Next we perform a series of scalability experiments to examine the robustness of our proposed algorithm under varying conditions of the simulated instance. Given that worker volatility is the most uncontrollable factor in crowdsourcing, the two parameters that we vary are: the available expertise and the available number of workers. Each parameter is modified independently, while all the other parameters of the baseline instance presented in 
section~\ref{synthetic} are kept the same. 
The variables that we measure are also the same as those measured for the baseline instance and include the objective function, as well as budget utilization, total flow time, number of assigned workers per task and percentage of the average quality threshold reached. 

The scalability experimental results are illustrated in Figures~\ref{fig:scal_obj_function1}-\ref{fig:qual_vs_exp}. For each of those figures the $x$ axis corresponds to the varied parameter, the $y$ axis to the measured variable and the vertical line at $x=1$ corresponds to the results of the baseline instance reported in section~\ref{quality_experiments}.

\subsubsection{Overview of scalability experimental results}
Two main remarks can be drawn as an overview of the scalability experiments. The first is about \textit{performance}: {\sc tas-online} is the highest performing among its online competitors, both regarding the value of the objective function, i.e.~the metric that the algorithm is meant to optimize, and on quality, without significant compromises on any of the remaining metrics. {\sc tas-online} is the only algorithm among those examine to achieve this: whereas certain algorithms come close to its performance for certain metrics and parameter values, the same algorithms are significantly low-performing in other metrics and parametrizations. This result indicates that the proposed algorithm has a better `value-for-money' compared to its competitors. 

The second remark is about \textit{consistency}: {\sc tas-online} is not only more performant, but its performance is consistent across the varying values of the scalability experimental parameters. This result indicates that the performance of the algorithm as detailed in section~\ref{quality_experiments} is not incidental but an inherent property of the algorithm, and reinforces trust in the algorithm's future usage. In the following we present a detailed analysis of the scalability experiments.

\subsubsection{Scalability effect on Objective Function: {\sc tas-online} gets consistently more jobs done}
In Figure~\ref{fig:scal_obj_function1} we measure the value of the objective function (i.e.~the number of accomplished jobs) as we modify expertise availability, and higher $y$ axis values are better. As we can observe, the {\sc tas} family of algorithms (both the online and the offline version) are able to achieve and maintain higher performance than their competitive algorithms, at all expertise levels. For average expertise levels less than the baseline instance {\sc tas-offline} is the best-performing algorithm, followed closely by {\sc tas-online}, while for expertise levels slightly higher than the baseline {\sc tas-online} takes and maintains precedence. As it can be expected, as the average worker expertise per knowledge domain drops, the performance of all algorithms drops steeply as well. Nevertheless, we can also observe that the {\sc tas} algorithms are more robust, in the sense that they maintain their high performance when the other algorithms already start losing theirs (notice for example the almost unchanged performance of {\sc tas-online} between $x=2$ down to $x=1.2$ compared to the steep performance drop of the other algorithms in the same range). 

A similar pattern can be observed when modifying the worker availability parameter (Figure~\ref{fig:scal_obj_function2}). In this case too, {\sc tas-online} is by far the most performant of all the online algorithms, surpassed only by its offline version. In fact, the performance difference between {\sc tas-online} and the rest of the algorithms is quite striking here, as {\sc tas-online} reaches approximately 60\% of the objective function value while the rest of the algorithms only reach 20\%. A second interesting remark that can be derived is that worker availability seems to have little effect on the algorithms after a certain critical mass of crowd workers has been gathered (which for our simulation corresponds to $x=0.4$, i.e. 40\% of the population of the baseline instance). These two observations (superiority of the {\sc tas-online} and small effect of worker availability after a certain critical mass) also hold when we measure the effect of the worker availability parameter on all other variables of the scalability experiment. Following this, and for reasons of brevity, we omit the rest of the scalability figures corresponding to the worker availability, and focus on the parameter of expertise availability which seems to have the highest effect.

\subsubsection{Scalability effect on Quality: {\sc tas-online} achieves higher quality}
Figure~\ref{fig:qual_vs_exp} illustrates the average task quality (expressed as the percentage of the quality threshold reached) for every level of expertise of the crowdsourcing population, and higher $y$ axis values are better. We observe that {\sc tas-online} manages to achieve the highest quality levels, surpassing even its offline version, for all expertise levels. In fact, given a certain level of expertise ($x=1.2$) and above, the algorithm manages to surpass the quality threshold set for the tasks. {\sc random-egoistic} and {\sc online greedy} are the second and third most performing algorithms respectively, but unlike {\sc tas-online} they achieve their high quality results, at the cost of accomplishing too few jobs, as it can be seen by juxtaposing Figures~\ref{fig:scal_obj_function1} and \ref{fig:qual_vs_exp}.

\subsubsection{Scalability on other parameters: {\sc tas-online} performs comparably to its competitors}
\textbf{Effect on cost}. We now examine the effect that the modification of expertise availability has on the budget used by the allocation algorithms (Figure~\ref{fig:budg_vs_exp}, smaller $y$ axis values are better). As we may observe, {\sc tas-online} consumes most ($\approx 90 \%$) of its available budget, at the same consumption level as the {\sc online greedy}, {\sc random} and {\sc random egoistic} algorithms. The {\sc random egoistic filter} and {\sc tas-offline} algorithms seem to make a slightly better usage of their budget. Nevertheless, the extra cost consumed by {\sc tas-online} is small, especially as expertise levels grow and more experts need to be paid (i.e. for $x>1.2$). The significance of this extra cost gets even smaller considering what we gain in terms of the objective function (Figure~\ref{fig:scal_obj_function1}), where {\sc tas-online} consistently accomplishes more jobs (almost up to double for $x=1.2$) than {\sc random egoistic filter}. As such, {\sc tas-online} has a much 
higher `value-for-mone' (jobs done vs. cost ratio) compared to its competitors. 

\textbf{Effect on Flow Time}. Figure~\ref{fig:flow_vs_exp} shows the flow time of the algorithms for varying levels of expertise availability, and smaller $y$ axis values are better. As we may observe, the proposed {\sc tas-online} algorithm behaves similarly to the rest of the online algorithms. This shows that there is no trade-off of performance for time, i.e.~our algorithm does not achieve its higher objective function values at the cost of flow time.

\textbf{Effect on Number of Assigned Workers.}
Figure~\ref{fig:assign_vs_exp} shows the change in the average number of workers per task, as we modify the availability of expertise, and lower values of the $y$ axis are better. Here, and for most algorithms, we observe a very steep drop in the number of assigned workers, as the average expertise of the crowd worker population increases. This fact is to be expected, as the algorithms need to assign multiple workers to achieve the quality thresholds when expertise is scarce. 

\begin{figure*}
\centering
\begin{minipage}{0.45\textwidth}
\centering
\includegraphics[width=1.0\textwidth]{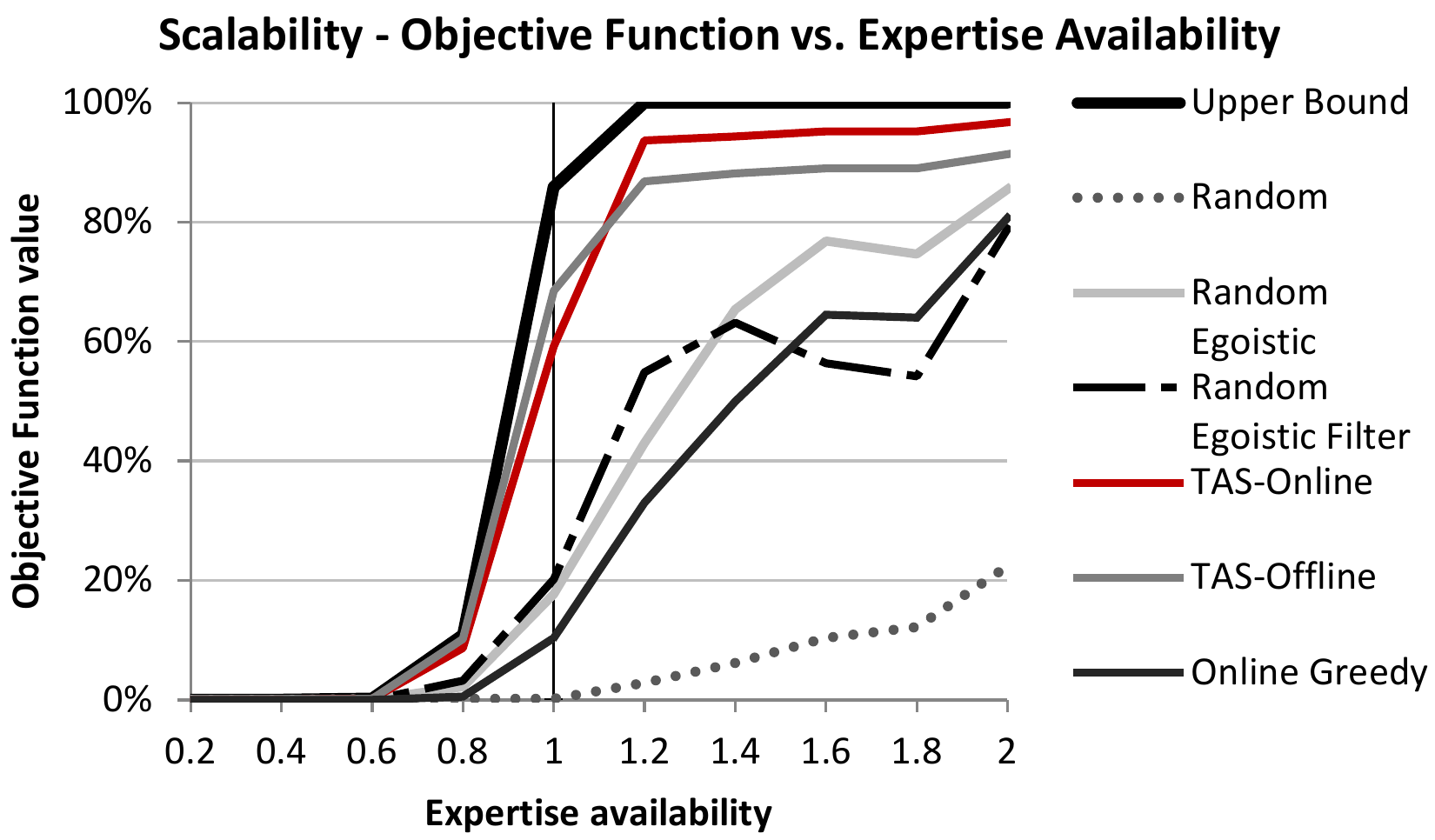}
\caption{Objective function vs. expertise availability. The vertical line corresponds to the baseline simulated instance.}
\label{fig:scal_obj_function1}
\end{minipage}\hfill
\begin{minipage}{0.45\textwidth}
\centering
\includegraphics[width=1.0\textwidth]{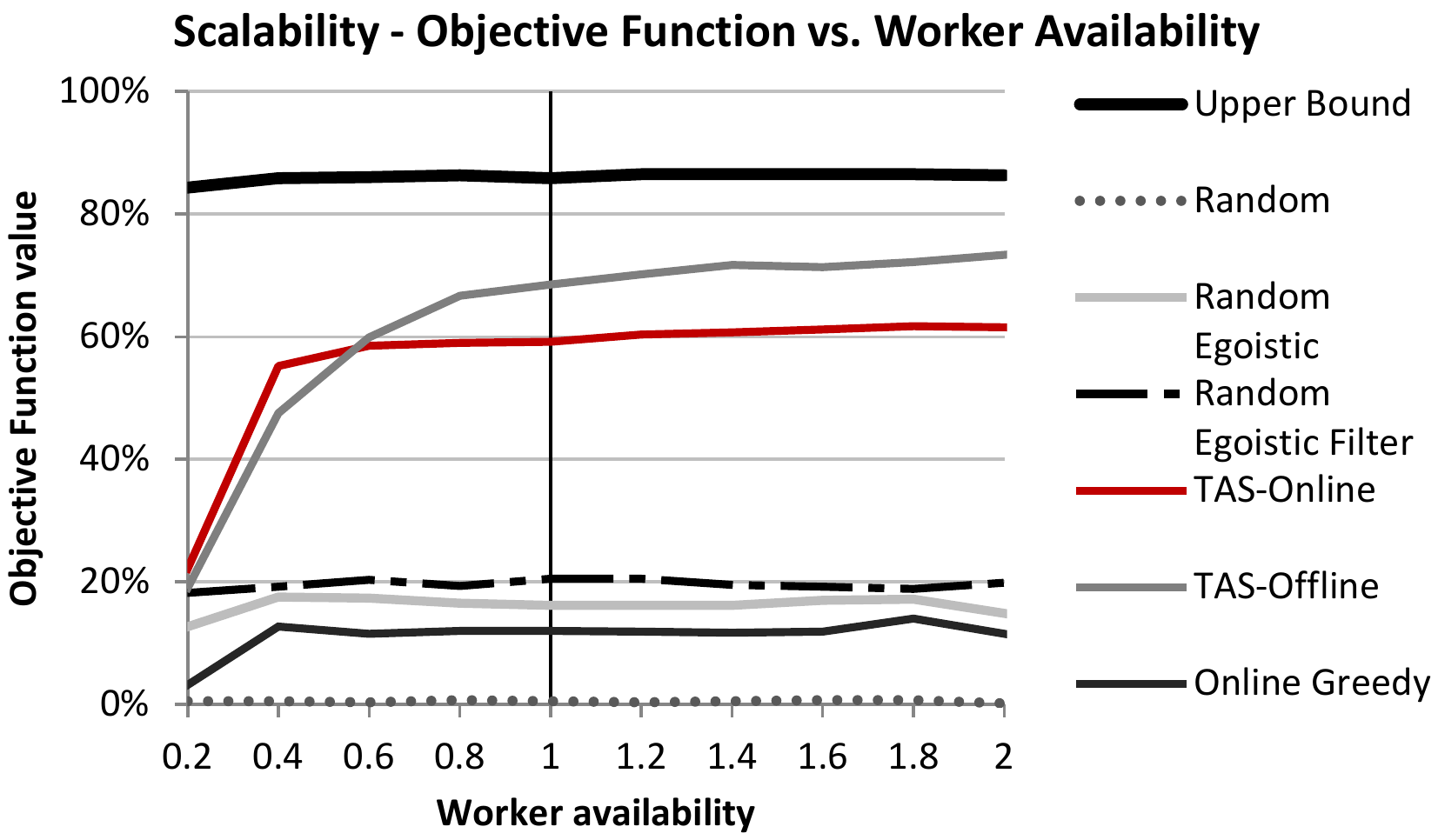}
\caption{Objective function vs. worker availability.}
\label{fig:scal_obj_function2}
\end{minipage}
\end{figure*}

\begin{figure*}
\centering
\begin{minipage}{0.45\textwidth}
\centering
\includegraphics[width=1.0\textwidth]{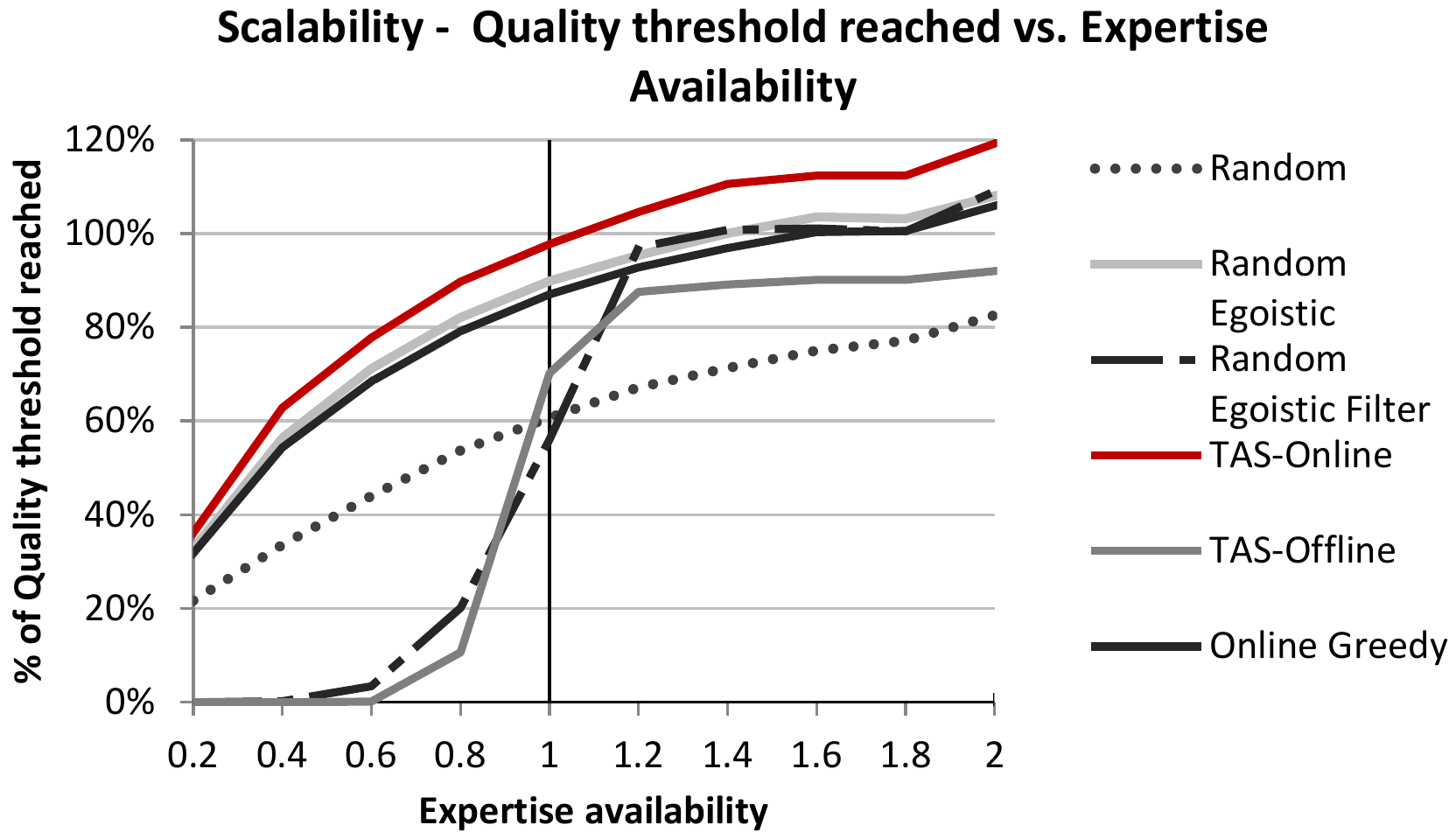}
\caption{Quality reached vs. expertise availability.}
\label{fig:qual_vs_exp}
\end{minipage}\hfill
\begin{minipage}{0.45\textwidth}
\centering
\includegraphics[width=1.0\textwidth]{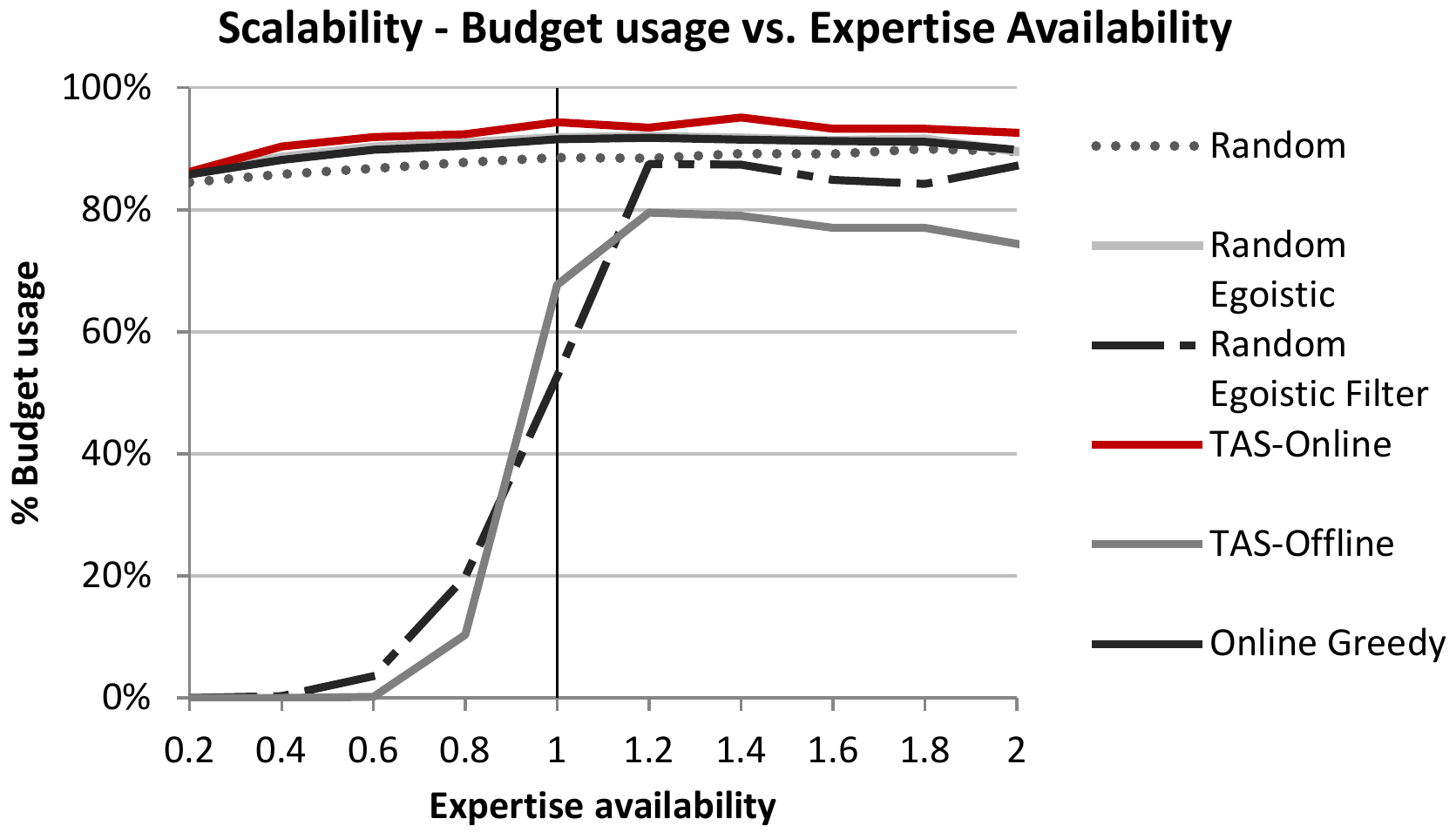}
\caption{Budget availability vs. expertise availability.}
\label{fig:budg_vs_exp}
\end{minipage}
\end{figure*}

\begin{figure*}
\centering
\begin{minipage}{0.45\textwidth}
\centering
\includegraphics[width=1.0\textwidth]{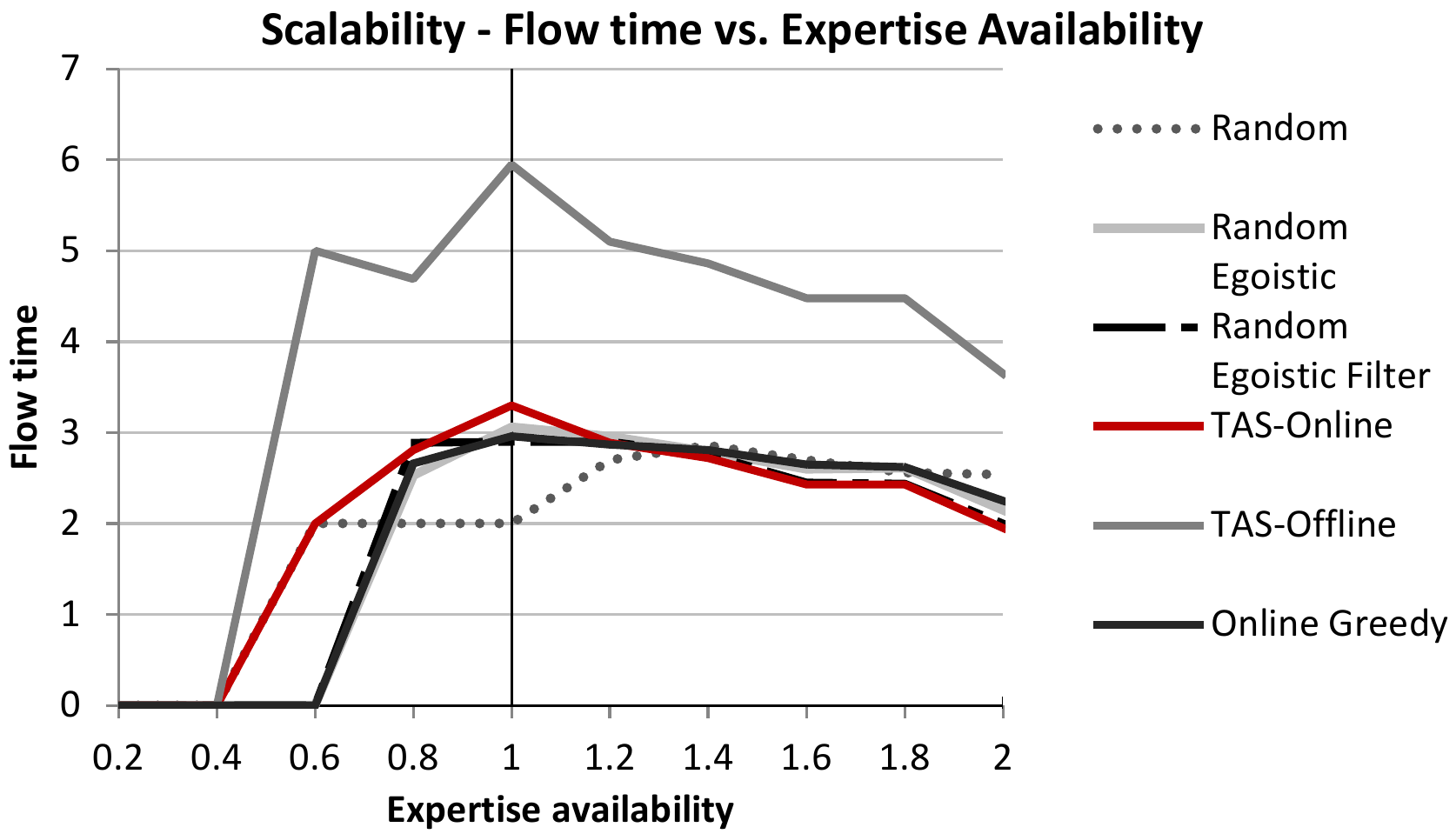}
\caption{Flow time vs. expertise availability.}
\label{fig:flow_vs_exp}
\end{minipage}\hfill
\begin{minipage}{0.45\textwidth}
\centering
\includegraphics[width=1.0\textwidth]{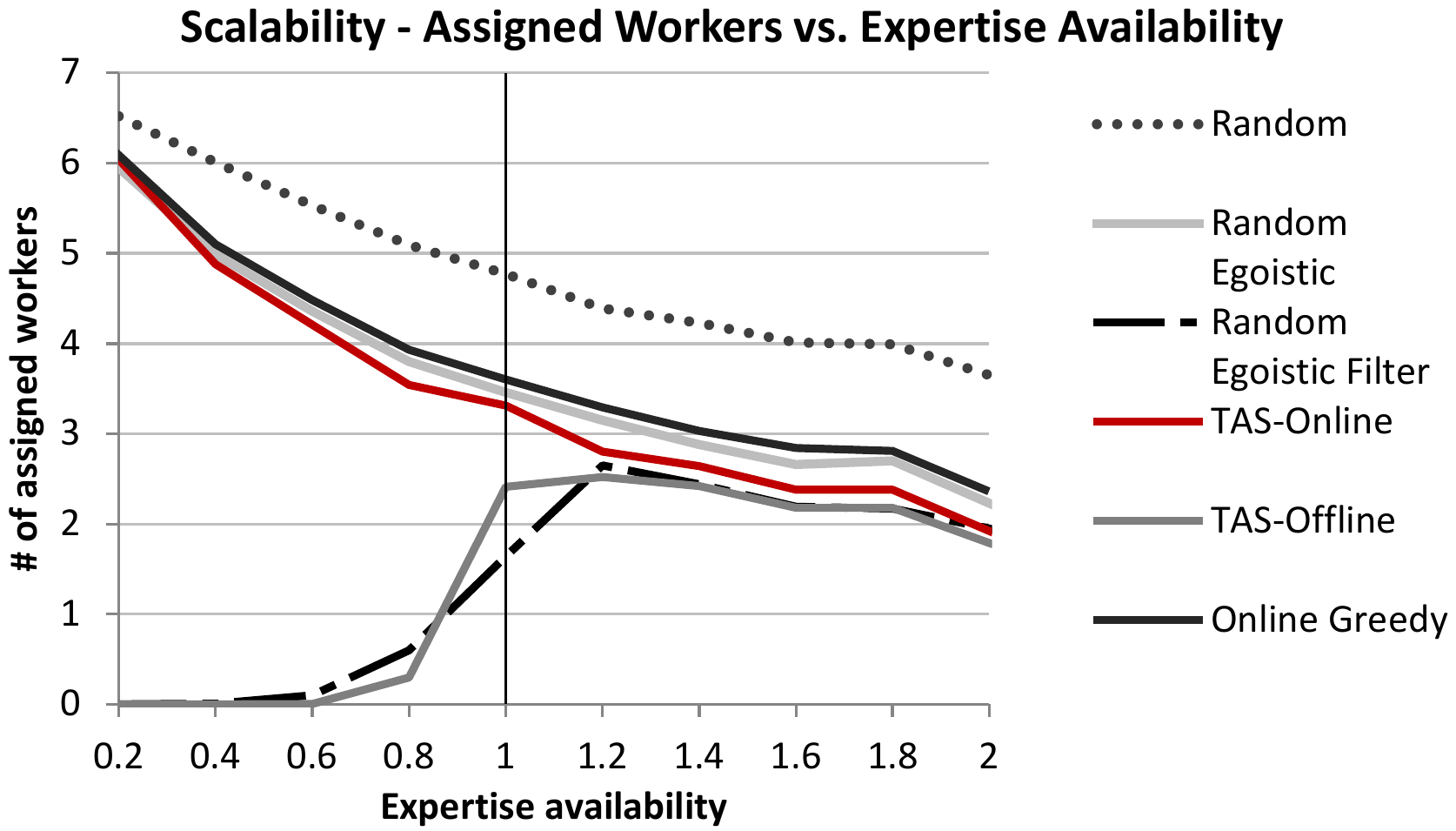}
\caption{Number of assigned workers vs. expertise availability.}
\label{fig:assign_vs_exp}
\end{minipage}
\end{figure*}

\subsection{Real Data Experiment}
\label{realexp}
To examine the effectiveness of the proposed algorithm, we conducted a real world experiment. The platform we used for this was CrowdFlower\footnote{http://www.crowdflower.com}. The task we used was collaborative news article writing, where workers from an initial hiring pool were asked to build on each other's content sequentially, enriching a news article text on a given topic. In more detail, our experimental workflow consisted of three steps. 

In the first step we recruited a pool of 60 workers and recorded their expertise, wage and availability. To measure expertise, we asked each worker to complete two short multiple choice tests, each comprising 10 questions and measuring the workers' know\-ledge skills on a particular topic of current interest, i.e. "The FIFA 2015 corruption scandal" and "Self-driving cars" respectively. Each one of these topics is considered a knowledge domain for the purposes of our experiment. We also gave workers an estimation of the effort that they would have to spend on the second round of the task and asked them to provide us with their required wage. 

In the next round we split the hired pool of workers randomly into two parts, one to be used by the benchmark and one by the optimization algorithm during scheduling. We also created six Google documents for each algorithm, three per knowledge domain, which corresponded to the jobs that would have to be accomplished. The quality and cost thresholds, as well as the release date for each job were set according to the same job generation criteria used in the synthetic experiments, and they were the same for the jobs of the benchmark and the optimization algorithm. 
In regards to the algorithms to be compared we used {\sc random ego. filter}  as benchmark with $\mathit{factor}=0.3$ (a worker was to be allowed to take a job only if his expertise was at least $30\%$ of the target job quality) and 
{\sc tas-online} as the optimization algorithm. Finally we set the scheduling period to $t=8$ slots and the time unit to one day. 
Each day, one worker would be invited to contribute to each Google document, according to the benchmark and the optimized algorithm. At the end of that day the document would be locked for the particular worker and sent for evaluation by a crowd of $50$ independent crowd workers (different than those used in the experiments) to evaluate the job's current quality. Then, if the job had not surpassed its quality threshold and not exhausted its budget, a new worker was invited to work on the document. 

At the end of the scheduling period, the results were as follows: The benchmark algorithm achieved a successful completion of $3$ out of $6$ jobs, while the optimization algorithm achieved successful completion of $5$ out of $6$ jobs. As it was expected the benchmark algorithm either allowed workers of the minimum necessary expertise to take a job, thus delaying the jobs quality progress too much, or it starved the job of budget. On the other hand {\sc tas-online} selected workers in such a way as to improve job completion within the given time period. As illustrated in Figures~\ref{fig:quality_rate_real} and~\ref{fig:completion_rate_real}, similarly to the respective results of the synthetic experiments, 
{\sc tas-online} achieved higher average job quality and job completion percentages than the benchmark.

\begin{figure}[h]
\center
\includegraphics[scale=0.48]{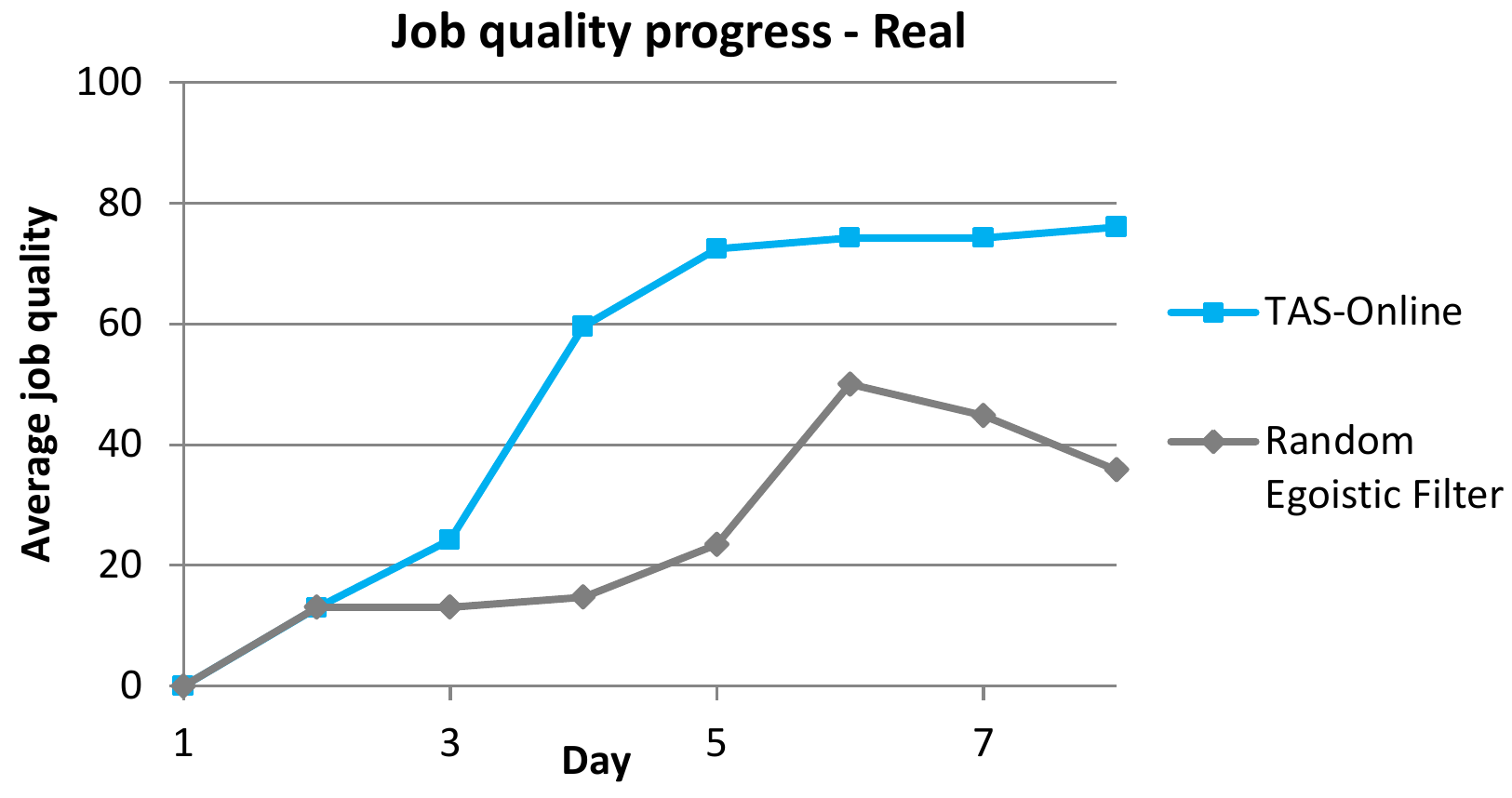}
\caption{Average job quality over time. The proposed  algorithm {\sc tas-online} manages to achieve higher quality per time unit compared to the benchmark. Time unit expressed in days.}
\label{fig:quality_rate_real}
\end{figure}

\begin{figure}[h]
\center
\includegraphics[scale=0.48]{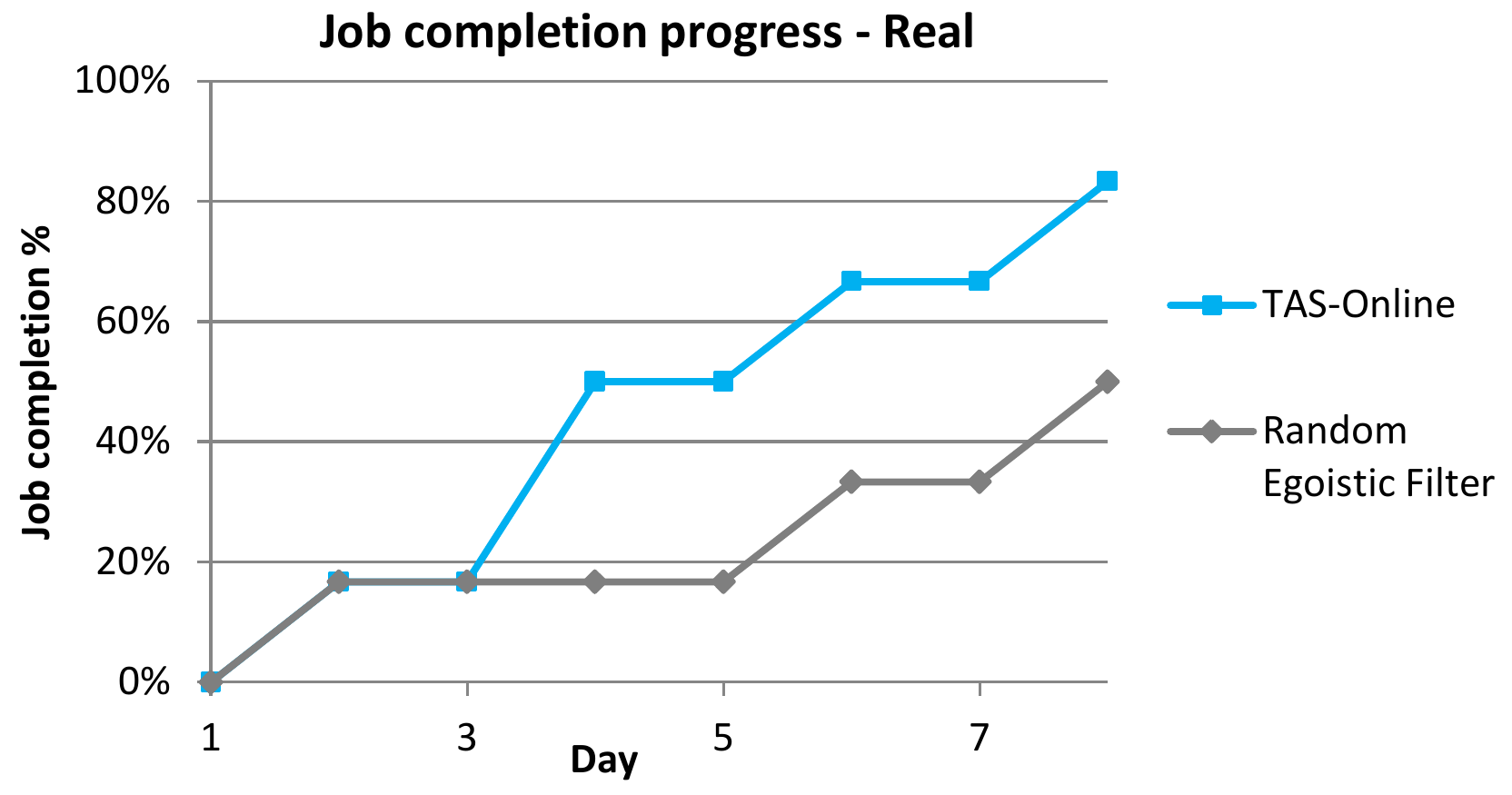}
\caption{Job completion percentage over time. The proposed  algorithm {\sc tas-online} manages to achieve higher job completion rates per time unit compared to the benchmark. Time unit expressed in days.}
\label{fig:completion_rate_real}
\end{figure}
\section{Discussion and Future Extensions}
\label{extensions}
The \kic model presented in this paper is the first concrete attempt to incorporate time-sensitive optimization in expert crowdsourcing. Our results, as presented in the previous, indicate that this model can improve the performance of crowdsourcing systems and help them utilize their human capital more effectively. Nonetheless, several challenges still lie ahead and many further extensions can be envisioned. In this final section we briefly discuss how the proposed \kic model and the {\sc tas-online} algorithm can be adapted to address further challenging settings that may appear in practice.

\paragraph{Budget flexibility.} Our initial \kic model assumes a fixed  budget per job, set by the customers before the job is launched. However certain commercial platforms, like CrowdFlower allow jobs to go above their initial budget, and this option could be used in order to recruit better qualified workers during the scheduling period. One simple approach for adapting the {\sc tas-online} algorithm to the `flexible budget option' is to recompute the daily matchings with alternating remaining budgets and explore their effects. A second adaptation is allowing edges to stay in the bipartite graph if the cost exceeds the remaining budget by a given fixed percentage, which can even be specified per job. Since we want to avoid that the algorithm makes too much use of the additional budgets we can reduce the profit of such edges accordingly. A multi-objective view of the optimization problem can also be useful to reveal these budget trade-offs.
 
\paragraph{Non-Acceptance of Assignment.} The initial \kic model assumes that worker availability does not change, once the workers declare themselves available for a specific day. An adapted version of this model could be that workers can decline a certain assignment or they may be marked as unavailable by the system after a certain period, waiting for their reaction, has timed-out. A natural adaptation of the {\sc tas-online} algorithm to this problem is to perform a partial recomputation of the matching for the specific day, where all accepted assignments (workers and corresponding tasks) and  
all stalled workers are removed from the graph. Note that this also allows to bring in new workers and jobs that became available only very recently within the day.

\paragraph{Job prioritizing.} The current \kic model gives all jobs the same priority. However some jobs may become more urgent than others during the scheduling period, for example in crowdsourcing systems dealing with crisis response~\cite{DBLP:journals/corr/ImranLC13}. Job prioritization can be incorporated in the model based on a given deadline, on the quality left to reach the threshold (jobs close to threshold go first), or other criteria. Our {\sc tas-online} algorithm uses a flexible profit function to rate all feasible assignments per day.  Therefore if some jobs (or workers) become preferred over others, the profits on the respective edges can be easily changed. This change is possible for individual worker-task combinations and it can additionally be adjusted every day for a close progress control. 

\paragraph{Quality aggregation model.} Like many relevant studies in the area (e.g.~\cite{Goel:2014:ATW:2567948.2577311,roy2015}), our current \kic model uses a sum function to calculate job quality, assuming that a job's quality is the sum of expertise of the individual workers that have contributed to the job. Nevertheless, and depending on the context, this calculation mode may not always reflect correctly a job's quality. For instance in cases of highly subjective tasks the quality of contributions of the same worker may vary even on tasks belonging to the same topic. In these cases, quality may need to be calculated in a different way, e.g. using crowd-based evaluations after each worker contribution, similarly to our real-world experiment. Here we need to distinguish two cases of adaptation for our algorithm. The first is incorporating a profit function that tries to forecast the benefit from a certain assignment, prior to that assignment. The second is incorporating an extra mechanism that is responsible for determining the quality reached per job, day and assignment. The extra costs associated with this mechanism can also form part of the profit function. Both these adaptation are feasible depending on the exact aggregation model needed.

\paragraph{Learning.} In our initial \kic model we consider expertise as an inherent, fixed property of each worker. For certain tasks however, like creative ones, the expertise of a certain individual can develop over time, and with the number of accomplished tasks, as workers `learn by doing'~\cite{Dow:2012:SCY:2145204.2145355,crowdsnotdrones}. An improved version of the model could recognize this fact and perform an adjustment of worker expertise over time. According to this version, the expertise per worker needed by the {\sc tas-online} during graph construction each day could be the outcome of a previous learning process (e.g. machine learning as in~\cite{Lykourentzou201018}). The online version of the \kic algorithm is particularly well suited for such an dynamic adjustment.

\paragraph{Order of workers per job.} In our problem formulation the order in which the assigned workers per job are placed on the timeline is arbitrary (openshop model), to simulate the first-come first-served mode of functionality of typical crowdsourcing platforms. It could nonetheless be reasonable for certain applications to require a specific order of workers, e.g. in a decreasing order of expertise. 
A straightforward approach to adapt {\sc tas-online} to such a request is to drop all the edges in the bipartite graph for each day such that the only workers that remain assignable are those that correspond to the order criteria. Again, this can be adopted over time such that, e.g., each job begins with an assignment of some workers with sufficient but relatively small expertise, to leave room and budget for enhancement.
  
\paragraph{Multiple assignments per worker and timeslot.} Constraint \ref{c1} of our examined \kic model allows only one task per worker and timeslot. It may be reasonable to relax this to some bounded number of tasks per worker and timeslot, which can also be changed over time, for instance to allow multiple assignments to more experienced workers. Currently {\sc tas-online} relies on computing daily matchings, and the matching condition on the worker-side of the bipartite graph corresponds directly to constraint \ref{c1}. However by taking a closer look at how the weighted matching are computed in terms of min-cost flow networks, we can observe that the capacities on source-edges and sink-edges in the flow network are usually set to $1$ to enforce the matching conditions. If we now allow larger values as capacities on sink edges we can also compute worker-task assignments for the relaxed condition of multiple assignments. 

The above correspond to the main modifications that could be made to adapt the proposed \kic model and {\sc tas-online} algorithm to multiple real-life situations, depending on the crowdsourcing platform, population and type of jobs at-hand. As such they could be used independently or in various combinations, as the starting points for further studies in this promising new field of expert crowdsourcing optimization.

\section{Conclusion}
\label{conclusion}

In this paper we present \kic, a model that adds the timeline and online perspective to task assignment optimization in expert crowdsourcing. Using a greedy scheduling algorithm, {\sc tas-online}, we show that optimization under this model can significantly improve expert crowdsourcing performance. Future extensions to enable the \kic model to handle further challenging settings include: 
include adding budget flexibility, job and/or worker prioritizing, fine-tuning the job quality aggregation mechanism, worker performance variability over time, and multiple assignments per worker/timeslot.

\section{Acknowledgments}
The work of Ioanna Lykourentzou in this paper has received funding support by the Luxembourg National Research Fund (FNR) under INTER Mobility grant \#8734708. The present paper is an extension of the work first presented at the Third AAAI Conference on Human Computation and Crowdsourcing (HCOMP 2015)~\cite{2015hcomp}.

\bibliographystyle{plain}
\bibliography{paperbib}

\end{document}